\theoremstyle{plain}
	\newtheorem{prop}{Proposition}[section]
	\newtheorem{theo}[prop]{Theorem}
	\newtheorem{lem}[prop]{Lemma}
\theoremstyle{definition} 
	\newtheorem{ex}[prop]{Example}
\newcommand{\eps}{\varepsilon}
\newcommand{\diff}{d}
\newcommand{\wt}{\widetilde}
\newcommand{\eqspace}{\hspace{4em}}
\newcommand{\Cov}{\text{cov}}
\newcommand{\Var}{\text{var}}
\newcommand{\R}{\mathbb{R}}
\newcommand{\E}{\mathbb{E}}
\newcommand{\N}{\mathbb{N}}
\newcommand{\Z}{\mathbb{Z}}
\renewcommand{\P}{\mathbb P}
\newcommand{\HD}{\text{HD}}
\newcommand{\toop}{\stackrel{\P}{\longrightarrow}}
\DeclareMathOperator*{\argmin}{arg\,min}
\newcommand{\dn}{\boxempty_n}
\newcommand{\tr}{\lhd\,} % {\bigtriangleup}
\newcommand{\ol}{\overline}
\newcommand{\sq}{\boxempty\,}
\newcommand{\be}{\mathbf e}
\newcommand{\bt}{\mathbf t}
\newcommand{\bx}{\mathbf x}
\newcommand{\bss}{\mathbf s}
\newcommand{\bu}{\mathbf u}
\newcommand{\bj}{\mathbf j}
\newcommand{\bi}{\mathbf i}
\newcommand{\bk}{\mathbf k}
\newcommand{\bb}{\mathbf b}
\newcommand{\br}{\mathbf r}
\begin{document}

\title{Hybrid simulation scheme for volatility modulated moving average fields}
\author{Claudio Heinrich \thanks{Norsk Regnesentral Oslo, 
E-mail: claudio.heinrich@nr.no}  \and
Mikko S. Pakkanen \thanks{Department
of Mathematics, Imperial College London, 
E-mail: m.pakkanen@imperial.ac.uk}  \and
Almut E.D. Veraart\thanks{Department
of Mathematics, Imperial College London,
E-mail: a.veraart@imperial.ac.uk}}

\maketitle

\begin{abstract}
We develop a simulation scheme for a class of spatial stochastic processes called volatility modulated moving averages. A characteristic feature of this model is that the behaviour of the moving average kernel at zero governs the roughness of realisations, whereas its behaviour away from zero determines the global properties of the process, such as long range dependence. Our simulation scheme takes this into account and approximates the moving average kernel by a power function around zero and by a step function elsewhere. For this type of approach the authors of \cite{BenLunPak2016}, who considered an analogous model in one dimension, coined the expression hybrid simulation scheme.
We derive the asymptotic mean square error of the simulation scheme and compare it in a simulation study with several other simulation techniques and exemplify its favourable performance in a simulation study.
\end{abstract}

{\it Key words}: \
Simulation, random field, moving average, stochastic volatility, Mat\'ern covariance.

{\it 2010 Mathematics Subject Classifications.}~60G60,~65C05,~60G22,~60G10.

%\begin{keyword}
%\end{keyword}

\section{Introduction}

In this article we develop a simulation scheme for real-valued random fields that we call \emph{volatility modulated moving average} (VMMA) fields. A VMMA is defined by the formula
\begin{align}\label{VMMA}
X_{\bt}=\int_{\R^2} g(\bt-\bss)\sigma_{\bss}W(d\bss),
\end{align}
 where $W$ is Gaussian white noise, $g\in L^2(\R^2)$ is a deterministic kernel, and $\sigma$ is a random volatility field.
This model has been used for statistical modelling of spatial phenomena in various disciplines, examples being modelling of vegetation and nitrate deposition \cite{HuaWanBreDav2011}, of sea surface temperature \cite{NguVer2017} and of wheat yields \cite{Yan2007}.
% For most of our theoretical results we assume $\sigma$ to be independent of $W$. However, this condition is not necessary for the model to be well-defined 
% We will assume mostly that $\sigma$ is independent of $W$

We are interested in the case where the moving average kernel $g$ has a singularity at zero. In this situation, the order of the singularity governs the roughness of the random field, specified by its Hausdorff dimension or index of H\"older continuity. 
Spatial stochastic models with Hausdorff dimension greater 2 (i.e. with non-smooth realisations) are for example used in surface modelling, where it is of high importance to model the roughness of the surface accurately. Specific examples are modelling of seafloor morphology \cite{GofJor1988} or surface modelling of celestial bodies \cite{HanThoOvcGneRic2015}.

A particular challenge in simulating volatility modulated moving averages lies in recovering the roughness of the field, while simultaneously capturing the global properties of the field, such as long range dependence. 
Our hybrid simulation scheme relies on approximating the kernel $g$ by a power function in a small neighbourhood of zero, and by 
a step function away from zero. This approach allows us to reproduce the explosive behaviour at the origin, while simultaneously approximating the integrand on a large subset of $\R^2.$
This idea is motivated by the recent work \cite{BenLunPak2016}, where the authors proposed an analogous scheme for the simulation of the one-dimensional model of Brownian semi-stationary processes. As a consequence, the hybrid simulation scheme preserves the roughness of the random field.

 It is known that any stationary Gaussian random field with a continuous and integrable covariance function has a moving average representation of the form \eqref{VMMA} with $\sigma$ constant, cf. \cite[Proposition 6]{HelProJen2008}.
This is for example satisfied for stationary Gaussian fields with Mat\'ern covariance, see Remark \ref{Matern} for details.
 In the literature, much attention has been devoted to the case when the roughness or shape parameter $\nu$ of the Mat\'ern model is integer valued, and to the cases $\nu=\frac 3 2$ and $\nu=\frac 5 2$, where the covariance function is often referred to as second- and third-order autoregressive function. When $\nu$ is integer-valued, the Gaussian field can be approximated by a Gaussian Markov random field which can be efficiently simulated, see \cite{LinRueLin2011} for details. Their approach cannot be applied when $\nu\in(0,1),$ which is the case under consideration in this paper. This rough Mat\'ern model has for example been used in \cite{GofJor1988} and in the context of turbulence modelling \cite{vonKar1948}, where the value $\nu=1/3$ is of particular interest.
 Introducing the stochastic volatility factor $\sigma$ allows for modelling spatial heteroscedasticity and non-Gaussian marginal distributions.
In particular, when $\sigma$ is covariance stationary and independent of $W$, and $g$ is as specified in Remark \ref{Matern}, the field $X$ is non-Gaussian with Mat\'ern covariance.
This is an alternative way of constructing non-Gaussian Mat\'ern covariance fields to the the more general approach taken in the recent publications \cite{Bol2014,WalBol2015}.

In a simulation study, we compare the hybrid simulation scheme to other simulation methods for the model \eqref{VMMA}, namely to what we call the Riemann-sum scheme, which corresponds to approximating the integrand by a step function, and to exact simulation using circulant embedding of the covariance matrix, as described in \cite{DieNew1993,WooCha1994}. The hybrid scheme is not exact, as it approximates the integrand only on a compact set. However, using circulant embeddings requires the process to be Gaussian and stationary, which the model \eqref{VMMA} only satisfies in some special cases, for example when $\sigma$ is constant. Moreover, in order to apply exact simulation methods, the covariance function of $X$ needs to be known, which is oftentimes costly to compute from the model \eqref{VMMA}.
In theory, the asymptotic computational costs of the hybrid scheme are slightly higher than for the circulant embeddings method, as $n\to\infty$, i.e. as the simulation grid gets finer, see Section \ref{secHS} for details. However, we found in our simulation study that for a wide range of parameters the hybrid scheme performs in fact faster than exact simulation, even for large values of $n$, see Table \ref{ComTim}.

This article is structured as follows. In Section \ref{secMod} we introduce our model in detail and discuss some of its properties. In Section \ref{secHS} we describe the hybrid simulation scheme and derive the asymptotic error of the scheme.  Section \ref{secNum} contains the simulation study comparing the hybrid scheme to other simulation schemes. Proofs  for our theoretical results are given in Section \ref{secPro}. The appendix contains some technical details and calculations.

%%%%%%%%%

\section{Volatility modulated moving average fields}
\label{secMod}

%%%%%%%%%

Let $(\Omega,\mathcal F,\P)$ be a probability space, and $W$ white noise on $\R^2$. That is, $W$ is an independently scattered random measure satisfying $W(A)\sim \mathcal N(0,\lambda(A))$ for all sets $A\in \mathcal B_0=\{A \in \mathcal B(\R^2)\,:\, \lambda(A)<\infty\}$, where $\lambda$ denotes the Lebesgue measure. Recall that a collection of real valued random variables $\Lambda=\{\Lambda(A)\,:\, A\in \mathcal B_0\}$ is called independently scattered random measure if for every sequence $(A_n)_{n\in\N}$ of disjoint sets with $\lambda(\bigcup_n A_n)<\infty$, the random variables $\Lambda(A_n), n=1,2,...$ are independent and $\Lambda(\bigcup_n A_n)=\sum_n \Lambda(A_n), $ almost surely.

The kernel function $g:\R^2\to\R$ is assumed to be of the form
\begin{align}\label{gAss}
g(\bt)= \wt g(\|\bt\|):=\|\bt\|^\alpha L(\|\bt\|)
\end{align}
for some $\alpha\in (-1,0)$, and a function $L:(0,\infty)\to(0,\infty)$ that is slowly varying at 0. Here and in the following $\|\cdot\|$ always denotes the Euclidean norm on $\R^2.$
Recall that $L$ is said to be slowly varying at 0 if for any $\delta>0$
\[\lim_{x\to 0} \frac{L(\delta x)}{L(x)}=1,\]
and that then the function $\wt g (x)=x^\alpha L(x)$ is called regularly varying at 0 of index $\alpha$. 
The explosive behaviour of the kernel at 0 is a crucial feature of this model, as it governs the roughness of the field.
Indeed, under weak additional assumptions the Hausdorff dimension of a realisation of $X$ is $2-\alpha$ with probability 1, see \cite{HanTho2013} and Theorem \ref{Roughness}, meaning that for $\alpha\to -1$ the realisations of $X$ become extremely rough.
In Figure \ref{figSam} we present samples of realisations of VMMAs for different $\alpha$.  

The roughness of realisations poses a challenge for simulation of volatility modulated moving averages. Indeed, possibly the most intuitive way to simulate the model \eqref{VMMA} is by freezing the integrand over small blocks and simulating the white noise over these blocks as independent centered normal random variables with variance equaling the block size. However, this method does not account for the explosive behaviour of $g$ at 0 and therefore does a poor job in reproducing the roughness of the original process correctly, in particular for values of $\alpha$ close to $-1$. We will demonstrate this phenomenon in a simulation study in Section \ref{secNum}.
The hybrid scheme resolves this issue by approximating $g$ around 0 by a power kernel, and approximating it by a step function away from 0. 

\begin{figure}
 \includegraphics[width=0.42 \textwidth, height = 0.2 \textheight]{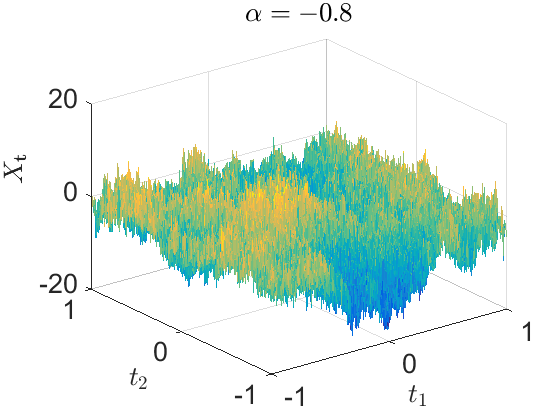}
 \includegraphics[width=0.48 \textwidth, height = 0.2 \textheight]{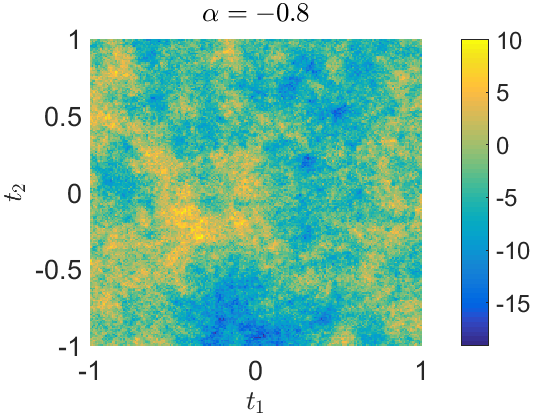}\\[2em]
\includegraphics[width=0.42 \textwidth, height = 0.2 \textheight]{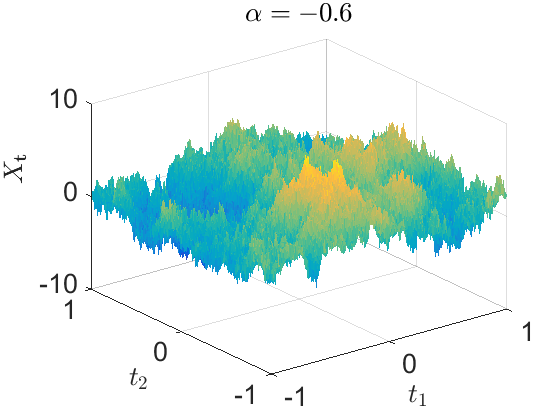}
 \includegraphics[width=0.48 \textwidth, height = 0.2 \textheight]{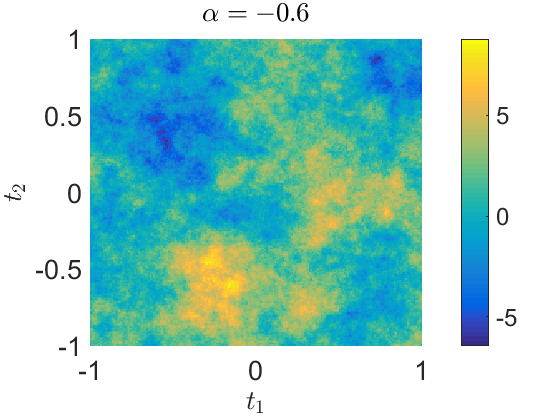}\\[2em]
 \includegraphics[width=0.42 \textwidth, height = 0.2 \textheight]{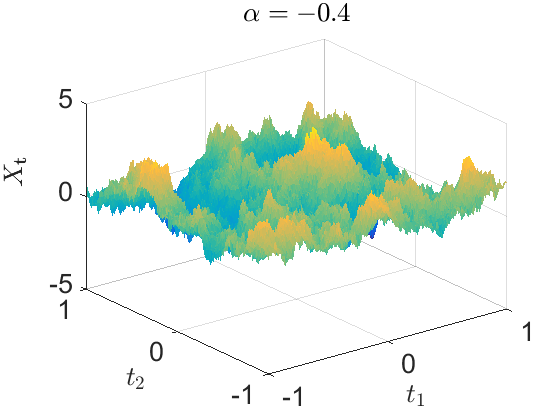}
 \includegraphics[width=0.48 \textwidth, height = 0.2 \textheight]{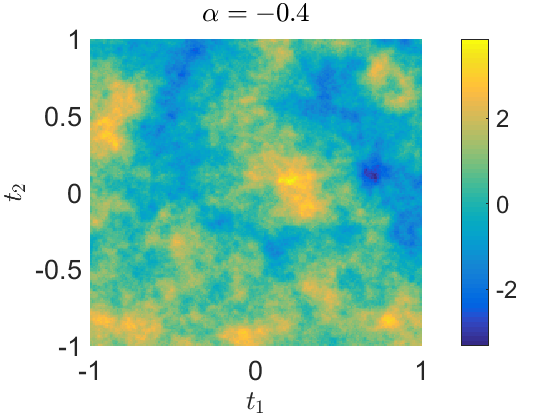}\\[2em]
 \includegraphics[width=0.42 \textwidth, height = 0.2 \textheight]{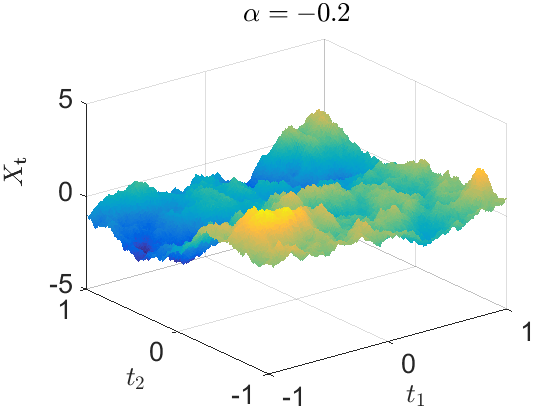}
 \includegraphics[width=0.48 \textwidth, height = 0.2 \textheight]{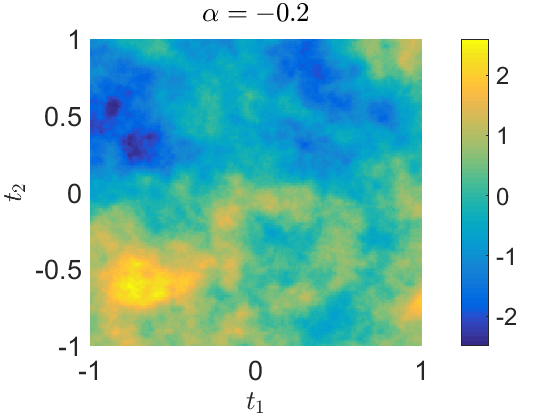}
  
\caption{Realisations of volatility modulated moving average fields for different $\alpha$ with Mat\'ern covariance, see Example \ref{Matern}. All plots range over $\bt\in[-1,1]^2$ and are generated with constant volatility $\sigma.$ In Section \ref{secNum} we present examples of VMMAs with nontrivial volatility.
 }
 \label{figSam}
  \end{figure}

The integral in \eqref{VMMA} is well defined, when $\sigma$ is measurable with respect to $\mathcal B(\R^2)\otimes \mathcal F$ and the process $\bss\mapsto g(\bt-\bss)\sigma_\bss(\omega)$ takes almost surely values in $L^2(\R^2)$. In particular we do not require independence of $\sigma$ and $W$ or any notion of filtration or predictability for the definition of the integral, as is usually used in the theory of stochastic processes indexed by time. This general theory of stochastic integration dates back to Bichteler \cite{Bic2002}, see also \cite{KwaWoy1992}. A brief discussion can be found in Appendix \ref{appInt}. When $\sigma$ and $W$ are independent, we can realise them on a product space and it is therefore sufficient to define integration with respect to $W$ for deterministic functions, which has been done in \cite{RajRos1989}.

The volatility field $(\sigma_\bss)_{\bss\in\R^2}$ is assumed to satisfy $\E[\sigma_\bss^2]<\infty$ for all $\bss$. Moreover, we assume $\sigma$ to be covariance stationary, meaning that $\E[\sigma_{\bss}]$ does not depend on $\bss$ and $\Cov(\sigma_{\bss+\br},\sigma_{\bss})=\Cov(\sigma_{\br},\sigma_{0})$ for all $\bss,\br\in\R^2$. In particular $\E[\sigma_\bss^2]=\E[\sigma_0^2]$ for all $\bss\in\R^2$. For some of our theoretical results we will assume that $\sigma$ and $W$ are independent, however we show in Appendix \ref{appInt} that this is not required for the convergence of the hybrid scheme.
We make the assumption that $\sigma$ is sufficiently smooth such that freezing $\sigma$ over small blocks will cause an asymptotically negligible error in the simulation. It turns out that this is the case when $\sigma$ satisfies
\begin{align}\label{sigmasmoothness}
\E[|\sigma_0-\sigma_{\bu}|^2]= o(\|\bu\|^{2\alpha+2}),\quad \text{for }\bu\to 0.
\end{align}
When $\sigma$ is independent of the Gaussian noise $W$, the covariance stationarity of $\sigma$ implies that the process $X$ is itself covariance stationary and covariance isotropic in the sense that $\E[(X_{\bt+\bss}-X_{\bt})^2]$ depends only on $\|\bss\|$. If $\sigma$ is in fact stationary, $X$ is stationary and isotropic.

Moreover, we pose the following assumptions on our kernel function $g$. They ensure in particular that $g$ is square integrable, which together with covariance stationarity of $\sigma$ ensures the existence of the integral in \eqref{VMMA}.
%Indeed, this is easy to verify using the Potter bound \eqref{ProtBou} with $y=1.$
\begin{enumerate}[label=(A\arabic*)]
\item \label{1con0}The slowly varying function $L$ is continuously differentiable and bounded away from 0 on any interval $(u,\sqrt 2]$ for $u>0.$
\item  \label{1con1} It holds that $\wt g(x)=\mathcal O(x^\beta),$ as $x\to\infty$, for some $\beta\in (-\infty,-1),$
\item \label{1con2} There is an $M>0$ such that $|\wt g'|$ is decreasing on $[M,\infty)$ and satisfies
\[\int_1^\infty  \wt g'(r)^2r\,\diff r<\infty.\]
\item \label{1con3} There is a $C>0$ such that $|L'(x)|<C(1+x^{-1})$ for all $x\in (0,1].$ 
 \end{enumerate}

An appealing feature of the VMMA model is its flexibility in modelling marginal distributions and covariance structure independently. 
Indeed, assuming that $\sigma$ is stationary and independent of $W$, the covariance structure of $X$ is entirely determined by the kernel $g$, whereas the marginal distribution of $X$ is a centered Gaussian variance mixture with conditional variance $\int_{\R^2} g(-\bss)^2 \sigma_{\bss}^2\diff \bss$, the distribution of which is governed by the distribution of $\sigma.$

The behaviour of the kernel at $0$ is determined by the exponent $\alpha$, whereas its behaviour away from 0, e.g. how quickly it decays at $\infty$, depends on the slowly varying function $L$. While the behaviour of $g$ at 0 determines local properties of the process $X$, like the roughness of realisations, the behaviour of $g$ away from 0 governs its global properties, e.g., whether it is long range dependent. Being able to independently choose $\alpha$ and $L$ allows us therefore to model local and global properties of the VMMA independently, which underlines the flexibility of the model. This separation of local and global properties, and the desire to capture both of them correctly, is one of our main motivations to use a hybrid simulation scheme.
We now formalise the statement that the roughness of $X$ is determined by the power $\alpha$. 
\begin{theo}\label{Roughness}
\begin{itemize}
\item [(i)]Assume independence of $\sigma$ and $W$. The variogram of $X$ defined as $V(h):=\E[(X_0-X_\bt)^2]$, where $h=\|\bt\|$, satisfies \[h^{-2-2\alpha} L(h)^2 V(h)\to 2\pi \E[\sigma_0^2]\int_{\R^2} \big(\|\bx+\be/2\|^\alpha - \|\bx-\be/2\|^\alpha \big)^2 \diff \bx    \quad\text{as }h\to 0,\]
where $\be$ is any vector with $\|\be\|=1.$ 
\item [(ii)] Assume additionally that the volatility is locally bounded in the sense that it satisfies $\sup_{\|\bss\|\leq M+1} \big\{\sigma_\bss^2\big\}<\infty$ almost surely, where $M$ is as in assumption \ref{1con2}. Then, for all $\eps>0,$ the process $X$ has a version with locally $\alpha+1-\eps$-H\"older continuous realisations.
 \end{itemize}
 \end{theo}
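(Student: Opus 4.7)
The plan is to prove (i) by Itô isometry followed by a scaling change of variables, and (ii) by a conditional application of Kolmogorov's continuity criterion. For (i), the independence of $\sigma$ and $W$, the Itô isometry for Gaussian white noise, and the covariance stationarity of $\sigma$ yield
\[
V(h) = \E\!\left[\int_{\R^2}\big(g(-\bss) - g(\bt - \bss)\big)^2 \sigma_\bss^2 \, \diff\bss\right] = \E[\sigma_0^2]\int_{\R^2}\big(g(-\bss) - g(\bt - \bss)\big)^2 \diff\bss.
\]
Writing $\bt = h\be$ and substituting $\bss = h\bu$, the form \eqref{gAss} of $g$ allows one to factor out $h^{2\alpha+2}L(h)^2$, leaving
\[
\int_{\R^2}\left(\|\bu\|^\alpha\frac{L(h\|\bu\|)}{L(h)} - \|\be-\bu\|^\alpha\frac{L(h\|\be-\bu\|)}{L(h)}\right)^2 \diff\bu.
\]
Slow variation of $L$ at $0$ gives pointwise convergence of the integrand to $(\|\bu\|^\alpha - \|\be-\bu\|^\alpha)^2$, and dominated convergence---with a dominating function assembled from Potter-type bounds (via \ref{1con3}) near $\bu=0$ and the decay from \ref{1con1}--\ref{1con2} for large $\|\bu\|$---yields the stated limit after the translation $\bu = \bx + \be/2$.

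For (ii), I would condition on $\sigma$. By the independence of $\sigma$ and $W$, the increment $X_{\bt} - X_{\bt'}$ is centered Gaussian given $\sigma$, with conditional variance
\[
V_\sigma(h) := \int_{\R^2}\big(g(\bt - \bu) - g(\bt' - \bu)\big)^2 \sigma_\bu^2 \, \diff\bu, \qquad h := \|\bt - \bt'\|.
\]
Restricting $\bt,\bt'$ to a fixed compact set, I would split this integral at the Euclidean ball $B$ of radius $M+1$ about the midpoint of $\bt$ and $\bt'$. On $B$, the local boundedness of $\sigma$ gives $\int_B\cdots\,\diff\bu \leq K(\omega)\int_B(g(\bt-\bu)-g(\bt'-\bu))^2\,\diff\bu$ for an a.s.\ finite random $K(\omega)$, and this deterministic integral is $O(h^{2\alpha+2}L(h)^2)$ by the scaling argument of part (i). On $B^c$, the mean value theorem combined with \ref{1con2} yields a pointwise estimate $(g(\bt-\bu)-g(\bt'-\bu))^2 \leq h^2\wt g'(r^*)^2$ with $r^*\geq M$, and covariance stationarity of $\sigma$ together with the integrability condition in \ref{1con2} makes $\int_{B^c}\cdots\,\diff\bu$ a.s.\ finite. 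Combining these bounds and absorbing the slowly varying factor via $L(h)^2 h^{-2\delta}\to 0$ gives $V_\sigma(h) \leq C(\omega)h^{2\alpha+2-2\delta}$ almost surely for arbitrary $\delta>0$. Since $\E[|X_\bt - X_{\bt'}|^p\mid\sigma] = c_p V_\sigma(h)^{p/2}$ by conditional Gaussianity, a pathwise application of Kolmogorov's continuity criterion then produces a version of $X$ that is locally $\gamma$-Hölder continuous for every $\gamma<\alpha+1-\delta-2/p$; sending $p\to\infty$ and $\delta\to 0$ gives the claim.

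The main technical obstacle is the dominated-convergence step in (i): one needs a single dominating function, uniform in small $h$ and integrable on $\R^2$, that controls $L(h\|\bu\|)/L(h)$ simultaneously for $\bu$ near $0$, at moderate scales, and at infinity. This requires combining Potter's theorem with the local regularity \ref{1con3} and the global decay encoded in \ref{1con1}--\ref{1con2}. For (ii), a subtle point is that $C(\omega)$ need not have any finite unconditional moments, but this is accommodated by applying Kolmogorov's criterion path-by-path to the conditional Gaussian field, where only the variogram bound and Gaussian moment identities are needed.
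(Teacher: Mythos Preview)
Your argument is correct and shares the paper's core strategy: isometry plus scaling plus dominated convergence for (i), and a near/far splitting followed by Kolmogorov--Chentsov for (ii). Two organisational differences are worth noting.

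For (i), the paper splits the integral at $\|\bss\|=1$ \emph{before} substituting $\bss=h\bx$, handling the outer piece $\{\|\bss\|>1\}$ directly as $O(h^2)$ via the mean value theorem and assumption \ref{1con2}. This keeps all arguments of $L$ on the inner piece inside $(0,1]$, so the Potter bound \eqref{PotBou} applies cleanly and the dominating function is built exactly as in \cite{BenLunPak2016}. Your global substitution forces you to also control $L(h\|\bu\|)/L(h)$ when $h\|\bu\|>1$, which is why you correctly flag the dominating function as the main obstacle; splitting first simply sidesteps this.

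For (ii), the paper does not condition on $\sigma$ but instead localises: it first proves the result under the deterministic hypotheses $\sigma_\bss^2\le m$ on $\{\|\bss\|\le M+1\}$ and $\int_{\{\|\bss\|\ge M+1\}}(g(\bt-\bss)-g(-\bss))^2\sigma_\bss^2\,\diff\bss\le m\|\bt\|^2$ for $\|\bt\|\le 1$, applies Kolmogorov--Chentsov to the truncated process $X\mathds 1_{\{Z\le m\}}$ with unconditional moment bounds, and then lets $m\to\infty$ after showing that the random variable $Z$ governing both constraints is a.s.\ finite (the tail bound again coming from \ref{1con2}). Your conditional-Gaussian route reaches the same conclusion and is arguably more direct; the paper's localisation has the minor advantage that Kolmogorov's criterion is invoked in its standard unconditional form, avoiding any discussion of how the pathwise version depends measurably on $\sigma$. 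One small point to watch in your version: since the stated local-boundedness assumption is only on $\{\|\bss\|\le M+1\}$, it is cleanest to reduce by stationarity to increments $X_\bt-X_0$ with $\|\bt\|\le 1$ and take $B=\{\|\bss\|\le M+1\}$ fixed, rather than a ball centred at the moving midpoint.
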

The proof can be found in Section \ref{secPro}. In \cite{HanTho2013} the authors analyse the variogram of a closely related model and derive similar results. 
 
We conclude this section by discussing examples of possible choices for kernel functions $g$ and volatility fields $\sigma$. 

\begin{ex}[Mat\'ern covariance]\label{Matern} 
Originally introduced in the context of tree population modelling in Swedish forests by Bertil Mat\'ern \cite{Mat1960}, the Mat\'ern covariance family has become popular in a variety of different fields such as meteorology, hydrology and machine learning. For an overview we refer to \cite{GutGne2006} and the references therein. 
It is characterised by the correlation function
\[C(\|\br\|)= \E[(X_\br-X_0)^2]/\E[X_0^2]= \frac{(\lambda \|\br\|)^\nu}{2^{\nu-1}\Gamma(\nu)}K_\nu(\lambda \|\br\|),\quad \br\in\R^2,\]
where $\nu>0$ is usually referred to as the shape parameter, while $\lambda>0$ is a scale parameter. Here, $K_\nu$ denotes the modified Bessel function of the second kind.
It has been shown in \cite{JonRonMouJen2013}, see also \cite{HanTho2013}, that the model \eqref{VMMA} has Mat\'ern correlation, when 
\[g(\bt)=\|\bt\|^{\frac{\nu-1}2}K_{\frac{\nu-1}2}(\lambda \|\bt\|),\]
provided $\sigma$ is independent of $W$ and covariance stationary.
When $\nu\in(0,1)$, the function $g$ satisfies our model assumptions  \ref{1con0}-\ref{1con3} with $\alpha=\nu-1$, as we argue next. The function
\[L(x)=x^{\frac{1-\nu}2}K_{\frac{\nu-1}2}(\lambda x)\]
is continuously differentiable on $(0,\infty)$. It holds that $\lim_{x\downarrow 0}L(x)= 2^{-\frac{\nu+1}2}\Gamma\big(\frac{\nu-1}2\big)$, see \cite[Eq. (9.6.9), p.375]{AbrSte1964}, which implies that $L$ is slowly varying at 0 and satisfies condition \ref{1con3}. Moreover, since $K_{\frac{\nu-1}2}(\lambda x)$ decays exponentially as $x\to\infty$, cf. \cite[ p.378]{AbrSte1964}, condition \ref{1con1} is satisfied for any $\beta<-1.$ Condition \ref{1con2} follows as well from the exponential decay together with the identity
\[\frac{\diff}{\diff x}(x^{\alpha/2}K_{\alpha/2}(x))=x^{\frac \alpha 2-1}K_{\frac \alpha 2 -1}(x).\]
\end{ex}

\begin{ex}[ambit fields]\label{ambit fields}
In a series of papers \cite{BarSch2007,BarSch2008} the authors proposed to model velocities of particles in turbulent flows by a class of spatio-temporal stochastic processes called {\it ambit fields}. Over the last years this model found manifold applications throughout various sciences, examples being \cite{BarBenVer2011,JenJonSchBar2006}.
 The VMMA model is a purely spatial analogue of an ambit field driven by white noise and can therefore be interpreted as a realisation of an ambit field at a fixed time $t$. In the framework of turbulence modeling, the squared volatility $\sigma_\bss^2$ has the physical interpretation of local energy dissipation and it has been argued in \cite{BarHedSchSzo2016} that it is natural to model $\sigma_\bss^2$ as (exponential of) an ambit field itself. A possible model for the volatility is therefore $\sigma_\bt^2=\exp(X'_\bt)$ where $X'$ is a volatility modulated moving average, independent of $W$. Applying Theorem \ref{Roughness}~(i) it is not difficult to see that this model satisfies assumption \eqref{sigmasmoothness} when the roughness parameter $\alpha'$ of $X'$ satisfies $\alpha'>\alpha.$
In its core, an ambit field is a stochastic integral driven by a L\'evy basis, which does not need to be Gaussian. A simulation of such integrals in the non-Gaussian case typically relies on a shot noise decomposition of the integral, as demonstrated in \cite{Ros1990}, see also \cite{CohLacLed2008}.
\end{ex}

%%%%%%%%%%%%%%%

\section{The Hybrid Scheme}
\label{secHS}

%%%%%%%%%%%%%%%

In this section we present the hybrid simulation scheme using the following notation.
For $r>0$ and $\bt=(t_1,t_2)\in\R^2$ we introduce the notation $\sq_r\bt$ for a square with side length $1/r$ centred at $\bt$, that is  $\sq_r\bt=\big[t_1-\frac{1}{2r},t_1+\frac{1}{2r}\big]\times\big[t_2-\frac{1}{2r},t_2+\frac{1}{2r}\big].$ We will suppress the index $r$ if it is 1, and will denote $\sq_r$ instead of $\sq_r 0.$ We simulate the process $X_\bt$ for $\bt\in [-1,1]^2$ on the square grid $\Gamma_n:=\big\{\frac 1 n (i,j),\ i,j\in\{-n,...,n\}\big\}$.

A first necessary step for approximating the integral \eqref{VMMA} is to truncate the range of integration, i.e.
\[X_\bt\approx \int_{\sq_{1/C}\bt} g(\bt-\bss)\sigma_\bss W(d \bss),\]
for some large $C>0$. 
To ensure convergence of the simulated process as $n\to\infty$, we increase the range of integration simultaneously with increasing the grid resolution $n$. We let therefore $C=C_n\approx n^\gamma$ for some $\gamma>0.$ More precisely, it proves to be convenient to choose $C_n= \frac{N_n+1/2}{n}$ with $N_n=[n^{1+\gamma}]$, where $[x]$ denotes the integer part of $x$. 

An intuitive approach to simulating the model \eqref{VMMA} is approximating the integrand on $\sq_{C_n^{-1}}\bt$ by freezing it over squares with side length $1/n$, i.e.
\begin{align}\label{Riemann}
X^{R,n}_\bt= \sum_{\bj\in \bt+\{-N_n,...,N_n\}^2} g(\bt-\bb_\bj/n)\sigma_{\bj/n}\int_{\sq_n\bj}W(d \bss),
\end{align}
where $\bb_\bj\in \sq \bj$ are evaluation points chosen such that $\bt-\bb_\bj/n\neq 0$ for all $\bt\in\Gamma_n$ and $\bj\in \Z^2.$
Indeed, $X^{R,n}_\bt$ can be simulated, assuming that the volatility $\sigma$ can be simulated on the square grid $\big\{\frac 1 n (i,j),\ i,j\in\Z\big\}$, since $\big\{\int_{\sq_n\bj}W(d \bss)\big\}_{\bj\in\Z^2}\overset{\text{\tiny i.i.d}}{\sim}\mathcal N\big(0,\frac 1 {n^2}\big)$. We will refer to this simulation method as Riemann-sum scheme. The authors of \cite{NguVer2017} use this technique to simulate volatility moving averages with bounded moving average kernel and demonstrate that it performs well in this setting.
In our framework, however, a crucial weakness of this approach is the inaccurate approximation of the kernel function $g$ around its singularity at $0$, which results in a poor recovery of the roughness of $X$. 

This weakness can be overcome by choosing a small $\kappa\in \N_0$ (typically, $\kappa\in\{0,1,2\}$) and approximating $g$ by a power kernel on $\frac 1 n [-\kappa-1/2,\kappa+1/2]^2.$
More specifically, denoting $K_\kappa=\{-\kappa,\dots,\kappa\}^2$ and $\overline K_\kappa=\{-N_n,\dots,N_n\}^2\setminus K_\kappa$, 
the hybrid scheme approximates $X_\bt$ by 
\begin{align}\label{Hyb1}
X_\bt^{n}:= &\sum_{\bj\in K_\kappa}\sigma_{\bt-\bj/n} L(\|\bb_{\bj}\|/n) \int_{\sq_{n} (\bt-\bj/n)}\|\bt-\bss\|^\alpha W(d\bss)\nonumber\\
&+\sum_{\bj\in \ol K_\kappa}\sigma_{\bt-\bj/n}g(\bb_\bj/n)\int_{\sq_{n} (\bt-\bj/n)} W(d\bss).
\end{align}
See Figure \ref{HSvis} for a visualisation.
\begin{figure}
 \includegraphics[width= \textwidth]{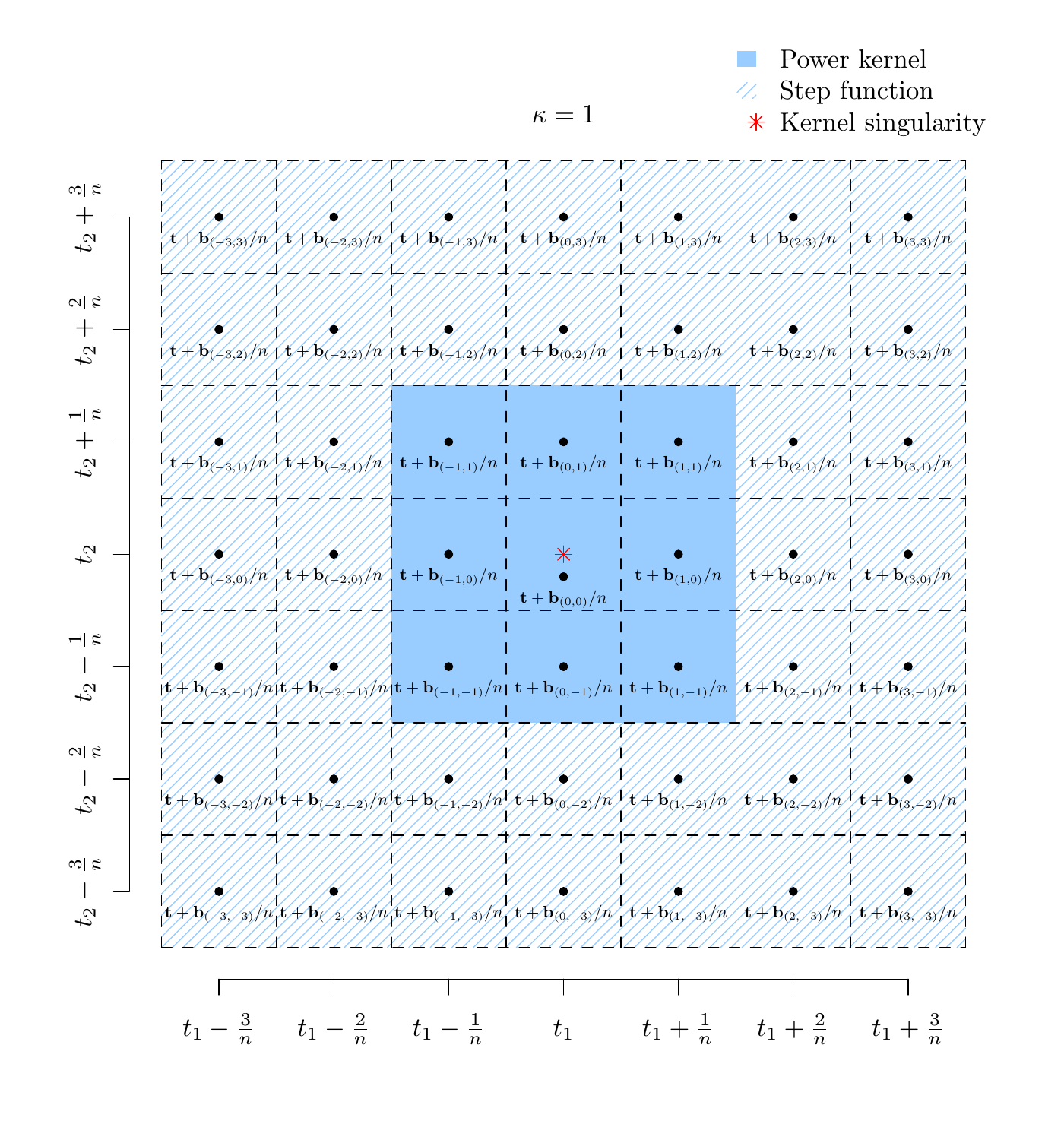}
 \caption{\label{HSvis}Visualisation of the hybrid scheme. Dividing $\R^2$ into small squares of size $1/n^2$, the kernel function $g$ is approximated by a power kernel in the squares close to the singularity, and by a step function further away. The figure shows the situation for $\kappa=1$, whereas for $\kappa=0$ ($\kappa=2$) the power kernel is used for only the central square (the central 25 squares). Simulating the random variables corresponding to the squares shown in the figure corresponds to simulating the process $X$ at $(t_1,t_2)\in\R^2$ only. For simulating $X$ at a different location $(t_1',t_2')$ we obtain the same pattern shifted and need to account for the covariances of the random variables (not shown).}
\end{figure}
In order to simulate $X_\bt $ on the grid $\bt\in\Gamma_n$, we simulate the families of centred Gaussian random variables $\mathcal W^1_n$ and $\mathcal W^2_n,$ defined as
\begin{align*}
\mathcal W^1_n&:=\bigg\{ W^{n}_{\bi,\bj}=\int_{\dn\bi/n}\|(\bi+\bj)/n -\bss\|^\alpha W(d \bss),\ W^n_\bi=\int_{\dn\bi/n} W(d \bss),\\
&\hspace{3em}\bi\in\{-n-\kappa,\dots, n+\kappa\}^2\text{ and }\bj\in K_\kappa
\bigg\},\\
\mathcal W^2_n&:=\bigg\{  W^n_\bi=\int_{\dn\bi/n} W(d \bss),\\
&\hspace{3em} \bi\in\{-N_n-n,\dots, N_n+n\}^2\setminus\{-n-\kappa,\dots, n+\kappa\}^2\bigg\}.
\end{align*}
 Indeed, replacing $\bt$ by $\bi/n$ in \eqref{Hyb1} yields
\begin{align}\label{X_i^nSim}
X^n_{\bi/n}&= 
\sum_{\bj\in K_\kappa} L (\|\bb_\bj\| ) \sigma_{\frac{\bi-\bj} n}W^n_{\bi-\bj,\bj}
+\sum_{\bj\in \overline K_\kappa} g(\bb_\bj /n)\sigma_{\frac{\bi-\bj} n} W^n_{\bi-\bj}\\
&=:\tilde X( \bi/ n)+\hat X( \bi/ n)
,\qquad\text{ for $\bi\in\{-n,\dots,n\}^2.$}\nonumber
\end{align}
By definition the random vectors $\big\{((W^{n}_{\bi,\bj})_{\bj\in K_\kappa},W^n_\bi),\bi\in\Z^2\big\}$ are independent and identically distributed along $\bi$. As a consequence, $\mathcal W^1_n$ and $\mathcal W^2_n$ are independent and $\mathcal W^2_n$ is composed of i.i.d.~$\mathcal N(0,1/n^2)$-distributed random variables. In order to simulate $\mathcal W^1_n$ we need to compute the covariance matrix of
$((W^{n}_{0,\bj})_{\bj\in K_\kappa},W^n_0)$, 
 which is of size $(|K_\kappa|+1)^2$ with $|K_\kappa|=(2\kappa+1)^2$. 
In contrast to the purely temporal model considered in \cite{BenLunPak2016}, computing the covariance structure becomes much more involved in our spatial setting. It relies partially on explicit expressions derived in Appendix \ref{appCov}, and partially on numeric integration.

Note that the complexity of computing $\tilde X(\frac \bi n)$ for all $\bi\in\{-n,...,n\}^2$ is $\mathcal O(n^2)$, as the number of summands does not increase with $n$. The sum $\hat X(\frac \bi n)$ can be written as the two dimensional discrete convolution of the matrices $A$ and $B$ defined by
\[A_\bk:= \begin{cases} 0&\bk\in K_\kappa\\
g(\bb_\bk/n)&\bk\in \ol K_\kappa \end{cases}
,\qquad B_\bk:= \sigma_{\bk/n} W^n_\bk ,\quad\text{for } \bk\in \{-N-n,...,N+n\}^2.
\]
We remark that this expression as convolution is the main motivation that in \eqref{Riemann} and \eqref{Hyb1} we chose to evaluate $\sigma$ at the midpoints $\bt-\bj/n$ of $\sq_n(\bt-\bj/n).$ 
Using FFT to carry out the convolution leads to a computational complexity of $\mathcal O(N^2 \log N)=\mathcal O(n^{2+2\gamma}\log n)$ for computing $\{\hat X(\frac \bi n)\}_{\bi\in\{-n,...,n\}^2}$. Consequently, the computational complexity of the hybrid scheme is $\mathcal O(n^{2+2\gamma}\log n)$, provided the computational complexity of simulating $\{\sigma_{\bi/n}\}_{\bi\in \{-N-n,...,N+n\}^2}$ does not exceed $\mathcal O(n^{2+2\gamma}\log n)$. By comparison we recall that the exact simulation of an isotropic Gaussian field using circulant embeddings is of complexity $\mathcal O(n^{2}\log n)$, see \cite{GneSevPerSchJia2006}. However, exact simulation requires $\sigma$ to be constant and the covariance structure to be known. 
If the kernel function $g$ is given, the covariance matrix often needs to be computed by numerical integration, leading to a complexity of $\mathcal O(n^4).$
 The complexity of Cholesky-factorisation for the covariance matrix of $X_\bt$ realised on the grid $\Gamma_n$ is $\mathcal O(n^6)$, see \cite[p.312]{AsmGly2007}.

Next we derive the asymptotics for the mean square error of the hybrid simulation scheme.
\begin{theo}\label{L2err}
Let $\alpha\in (-1,0)$. Assume that $\sigma$ is independent of $W$ and satisfies \eqref{sigmasmoothness}.
If $\gamma> -(1+\alpha)/(1+\beta)$, we have for all $\bt\in\R^2$ that
\[n^{2(\alpha+1)}L(1/n)^{-2}\E[|X_\bt-X^n_\bt|^2]  \to \E[\sigma_0^2]J(\alpha,\kappa,\bb),\quad \text{as }n\to\infty.\]
Here the constant $J(\alpha,\kappa,\bb)$ is defined as
\[J(\alpha,\kappa,\bb)=\sum_{\bj\in \Z^2\setminus\{-\kappa,\dots,\kappa\}^2}\int_{\boxempty\bj}(\|\bx\|^\alpha-\|\bb_{\bj}\|^\alpha)^2\diff\bx,\]
which is finite for $\alpha<0.$
% \item[(ii)] Let $\alpha>0.$ If $\gamma\geq -1/(1+\beta)$ there is a $C>0$ such that  
% \[\limsup_{n\to\infty} \{n^2\E[|X_\bt-X^n_\bt|^2]\} \leq C.\]
\end{theo}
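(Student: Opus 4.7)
The plan is to decompose the error $X_\bt - X_\bt^n$ along the three regions of integration: the far field $F_n := \R^2 \setminus \sq_{1/C_n}\bt$, the near field $N_n := \bigcup_{\bj \in K_\kappa}\sq_n(\bt-\bj/n)$ on which $g$ is approximated by a power kernel, and the annulus $A_n := \bigcup_{\bj \in \overline K_\kappa}\sq_n(\bt-\bj/n)$ on which $g$ is approximated by a step function. The difference $X_\bt - X_\bt^n$ splits accordingly into three stochastic integrals over pairwise disjoint regions; the independence of $\sigma$ and $W$ together with It\^o's isometry conditional on $\sigma$ makes the cross terms vanish and, after using $\E[\sigma_\bss^2] = \E[\sigma_0^2]$, collapses each $L^2$ norm to a deterministic squared approximation error plus a volatility-smoothing residue controlled by \eqref{sigmasmoothness}.

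The far-field term is disposed of first: assumption \ref{1con1} gives $\int_{F_n} g(\bt-\bss)^2 d\bss = \mathcal O(C_n^{2\beta+2}) = \mathcal O(n^{\gamma(2\beta+2)})$, and the condition $\gamma > -(1+\alpha)/(1+\beta)$ is precisely calibrated so that $\gamma(2\beta+2) + 2(\alpha+1) < 0$, making this $o(n^{-2(\alpha+1)})$, which slow variation upgrades to $o(n^{-2(\alpha+1)} L(1/n)^2)$. On the near field only the mismatch $L(\|\bt-\bss\|) - L(\|\bb_\bj\|/n)$ contributes, since the power factor $\|\bt-\bss\|^\alpha$ is kept exact; slow variation of $L$ at $0$ together with \ref{1con3} makes this difference $o(L(1/n))$ uniformly on the finitely many squares with $\bj \in K_\kappa$, and combined with $\int_{\sq_n(\bt-\bj/n)}\|\bt-\bss\|^{2\alpha} d\bss = \mathcal O(n^{-2\alpha-2})$ the near-field contribution is likewise $o(L(1/n)^2 n^{-2\alpha-2})$. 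The volatility-smoothing residues on all squares are controlled by \eqref{sigmasmoothness} and a Cauchy--Schwarz cross-term bound, the exponent $2\alpha+2$ in \eqref{sigmasmoothness} being calibrated precisely so that these fall below the target rate.

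The leading term comes from the annulus. The change of variables $\bx = n(\bt-\bss)$ maps $\sq_n(\bt-\bj/n)$ to $\sq\bj$ and yields $g(\bt-\bss) = n^{-\alpha}\|\bx\|^\alpha L(\|\bx\|/n)$, so the sum of annulus squared-kernel errors equals
\[\frac{L(1/n)^2}{n^{2\alpha+2}}\sum_{\bj \in \overline K_\kappa}\int_{\sq\bj}\bigl(\|\bx\|^\alpha R_n(\bx) - \|\bb_\bj\|^\alpha R_n(\bb_\bj)\bigr)^2 d\bx, \qquad R_n(\by):=\frac{L(\|\by\|/n)}{L(1/n)}.\]
For each fixed $\bj$, slow variation of $L$ at $0$ forces $R_n \to 1$ uniformly on $\sq\bj$, so the integrand converges pointwise to $(\|\bx\|^\alpha - \|\bb_\bj\|^\alpha)^2$. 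The main technical hurdle is to interchange this limit with the summation over the unbounded index set $\overline K_\kappa$, which I would carry out by dominated convergence split into two regimes. For $\|\bj\|/n$ bounded, Potter's inequality for slowly varying functions provides a polynomial-in-$\|\bx\|$ dominant, and combined with the Taylor estimate $\|\bx\|^\alpha - \|\bb_\bj\|^\alpha = \mathcal O(\|\bj\|^{\alpha-1})$ the $\bj$-th summand is of order $\|\bj\|^{2\alpha-2}$, summable over $\Z^2$ precisely because $\alpha < 0$, which simultaneously settles the claimed finiteness of $J(\alpha,\kappa,\bb)$. For $\|\bj\|/n$ large, assumption \ref{1con1} recast as $L(\|\by\|/n) = \mathcal O((\|\by\|/n)^{\beta-\alpha})$ delivers the tail decay, with $\beta < -1$ preserving integrability. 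This gives $\E[|X_\bt-X_\bt^n|^2] \sim \E[\sigma_0^2]\, L(1/n)^2 n^{-2\alpha-2} J(\alpha,\kappa,\bb)$, which is the claim after multiplication by $n^{2(\alpha+1)}/L(1/n)^2$.
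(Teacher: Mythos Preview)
Your overall strategy matches the paper's: split the squared error into far-field, near-field, and annulus contributions, show the first two are negligible at the rate $n^{-2\alpha-2}L(1/n)^2$, and extract the constant $J(\alpha,\kappa,\bb)$ from the annulus via a rescaling $\bx=n(\bt-\bss)$ and dominated convergence. Your handling of the far field, the near field, and the finiteness of $J$ is correct. One structural difference worth noting: the paper separates the volatility-freezing error by introducing an auxiliary process ${X'}^n_\bt$ that keeps the exact kernel but frozen volatility, then uses Minkowski's inequality to sandwich $\E[|X_\bt-X^n_\bt|^2]$ between $E_n(1\pm\sqrt{E_n'/E_n})^2$; this cleanly decouples the kernel error $E_n$ from the volatility error $E_n'$ and avoids the cross terms you propose to control by Cauchy--Schwarz. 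Both routes work, but the paper's is tidier.

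There is, however, a genuine gap in your annulus argument. Your dominated-convergence dominant via Potter's inequality only covers indices with $\|\bj\|\lesssim n$, since the Potter bound \eqref{PotBou} requires the arguments of $L$ to lie in $(0,1]$. For the remaining range $n\lesssim\|\bj\|\leq N_n$ you propose to invoke \ref{1con1} in the form $L(\|\by\|/n)=\mathcal O((\|\by\|/n)^{\beta-\alpha})$. But \ref{1con1} by itself, applied through the crude inequality $(a-b)^2\leq 2a^2+2b^2$, yields only
\[
\sum_{\|\bj\|>n}A_{\bj,n}\ \lesssim\ \frac{n^{2\alpha-2\beta}}{L(1/n)^2}\sum_{\|\bj\|>n}\|\bj\|^{2\beta}\ \asymp\ \frac{n^{2\alpha+2}}{L(1/n)^2},
\]
which, after multiplication by the prefactor $L(1/n)^2 n^{-2\alpha-2}$, contributes $O(1)$ to the mean square error rather than $o(1)$. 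The paper resolves this by splitting the annulus at $\|\bj\|_\infty=n$ and, on the outer piece (its $D_3$), applying the mean value theorem together with the derivative assumption \ref{1con2}: since $\bss$ and $\bb_\bj/n$ lie within $\sqrt 2/n$ of each other, $|g(\bss)-g(\bb_\bj/n)|\leq n^{-1}|\wt g'(\xi)|$, and the monotonicity and integrability of $r|\wt g'(r)|^2$ on $[M,\infty)$ then give $D_3=O(n^{-2})=o(n^{-2\alpha-2}L(1/n)^2)$. Without exploiting the smoothness encoded in \ref{1con2}, the cancellation in $g(\bss)-g(\bb_\bj/n)$ is lost and the bound is too weak by a full factor of $n^2$. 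You should replace the appeal to \ref{1con1} in the large-$\|\bj\|$ regime by this mean-value-theorem argument based on \ref{1con2}.
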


The proof is given in Section \ref{secPro}. This theorem and the computational complexity $\mathcal O(n^{2+2\gamma}\log n)$ of the hybrid scheme provide guidance how to choose the cutoff parameter $\gamma$. It should be chosen small under the constraint $\gamma> -(1+\alpha)/(1+\beta)$, where $\beta$ is chosen minimally such that \ref{1con1} is satisfied. For example if $X$ is of Mat\'ern type as in Example \ref{Matern}, the function $\wt g$ decays exponentially, and $\beta$ can be chosen arbitrarily small. In this case the asymptotic of the mean square error given in the theorem applies for any $\gamma>0.$

The sequence of evaluation points $\bb=(\bb_{\bj})_{\bj\in\Z^2}$ can be chosen optimally, such that it minimises the limiting constant $J(\alpha,\kappa,\bb)$ and thus the asymptotic mean square error of the hybrid scheme. To this end $\bb_{\bj}$ needs to be chosen in such a way that it minimises
 \[\int_{\sq \bj}(\|\bx\|^\alpha-\|\bb_{\bj}\|^\alpha)^2 d\bx,\]
 for all $\bj\in\Z^2$.
 By standard $L^2$ theory, $c\in\R$ minimises $\int_{\sq \bj}(\|\bx\|^\alpha-c)^2 d\bx$ if and only if the function $\bx\mapsto \|\bx\|^\alpha-c$ is orthogonal to constant functions, that is, if it satisfies
 \[\int_{\sq \bj}(\|\bx\|^\alpha-c) d\bx=0.\]
It follows then that $J(\alpha,\kappa,\bb)$ becomes minimal if we choose $\bb$ such that 
\begin{align}
\|\bb_{\bj}\|=\bigg(\int_{\sq \bj}\|\bx\|^\alpha d\bx\bigg)^{1/\alpha}.\label{optdis}
\end{align}
 In Appendix \ref{appCov}, we derive an explicit expression for this integral involving the Gau\ss\ hyperbolic function $_2F_1$. 
However, in our numerical experiments computing these integrals explicitly for all $\bj\in \ol K_\kappa$ slowed the hybrid scheme down considerably, and we recommend choosing the midpoints $\bb_\bj=\bj$ instead. Figure \ref{figJ} shows the constant $J(\alpha,\kappa,\bb^{\mathrm {opt}})=J_{\text{opt}}$ for optimally chosen evaluation points $\bb^{\mathrm{opt}}$ and the error caused by choosing midpoints $\bb_\bj=\bj$ instead, giving evidence that choosing midpoints leads to a nearly optimal result.

 \begin{figure}
 \includegraphics[width=0.45 \textwidth]{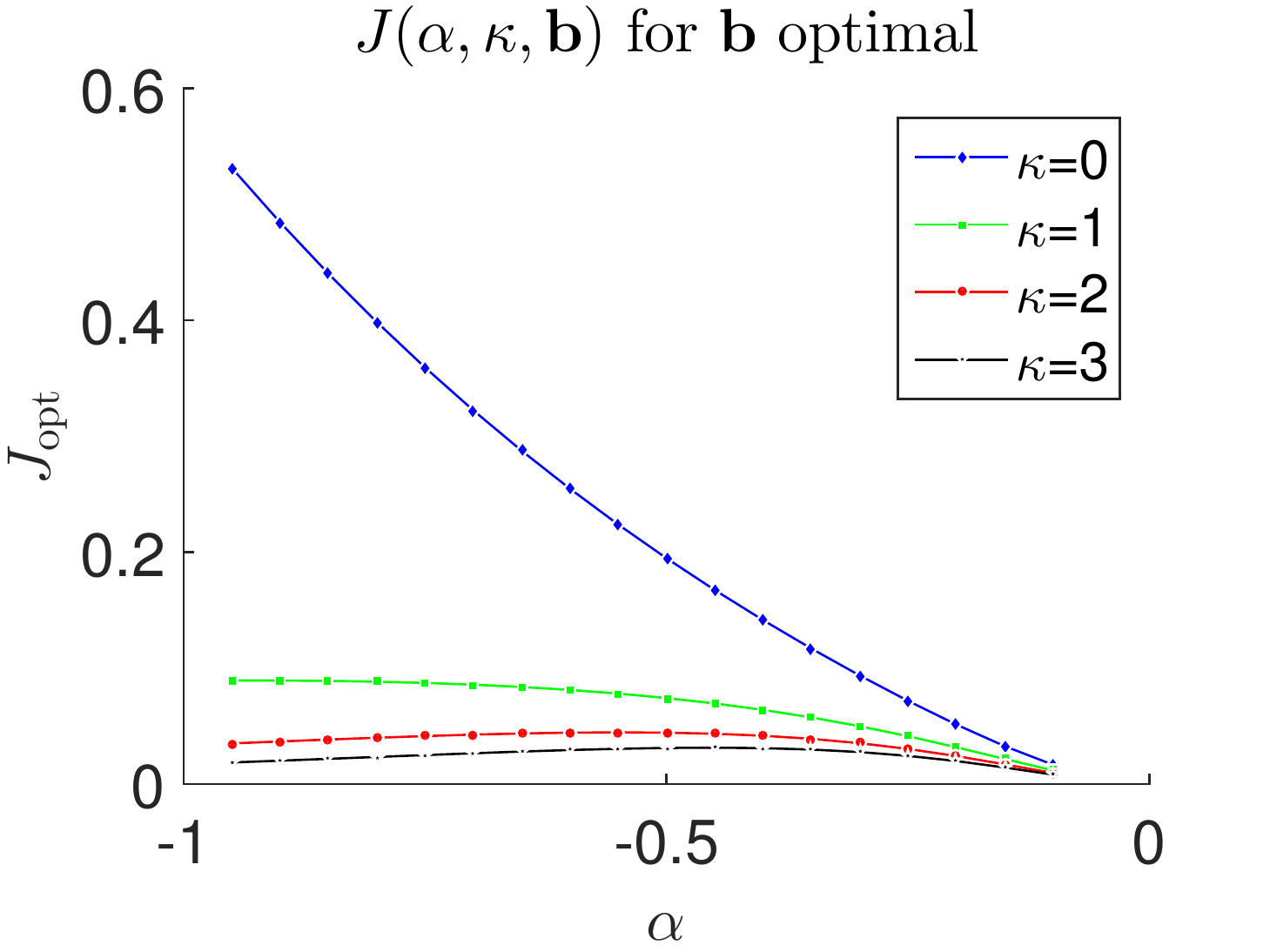}\hspace{1em}
 \includegraphics[width=0.45 \textwidth]{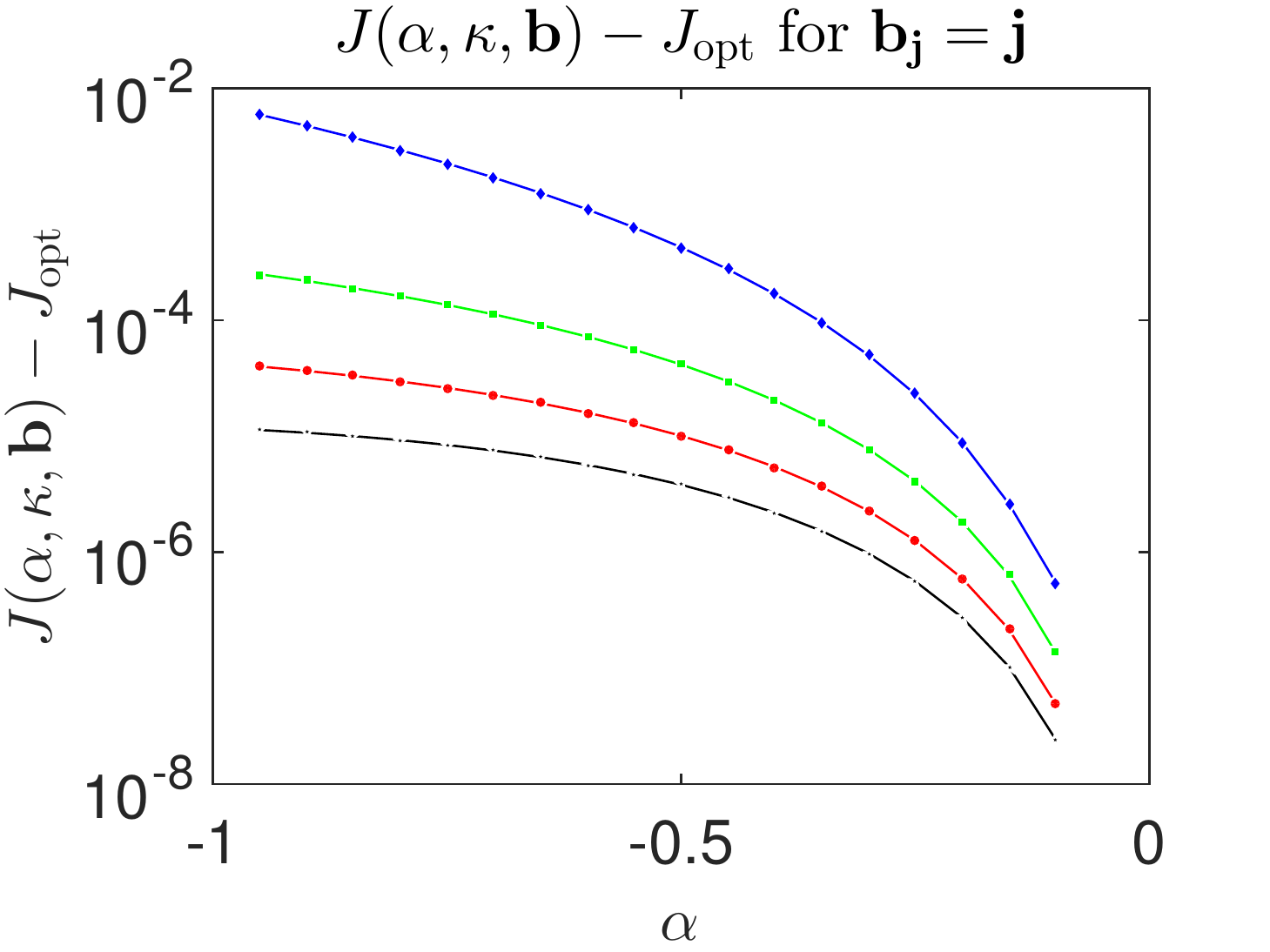}
 \caption{The first figure shows the value of $J(\alpha,\kappa,\bb^{\mathrm {opt}})=J_{\text{opt}}$ for different values of $\alpha$ and $\kappa$ where the evaluation points $\bb^{\mathrm {opt}}$ are chosen optimally, as in \eqref{optdis}. The second figure shows the absolute error $J(\alpha,\kappa,\bb)-J_{\text{opt}}$ for $\bb$ chosen as midpoints, i.e. $\bb_\bj=\bj$, demonstrating that this choice leads to close to optimal results.}\label{figJ}
  \end{figure}

% \begin{figure}
% \begin{tabular}{ccccc}
% $\alpha	$   &	$\kappa=0	$	&	$\kappa=1$	   &	$\kappa=2 $    & $\kappa=3$\\
% -0.8&	0.3985  &  0.0883  &  0.0400  &  0.0234\\
% -0.6&	0.2548  &  0.0812  &  0.0443  &  0.0294\\
% -0.4& 	0.1414  &  0.0640  &  0.0417  &  0.0310\\
% -0.2&	0.0514  &  0.0318  &  0.0242  &  0.0199
% \end{tabular}
% \hspace{2em}\begin{tabular}{crrrr}
% $\alpha	$   &	$\kappa=0	$	&	$\kappa=1$	   &	$\kappa=2 $    & $\kappa=3$\\
% -0.8		&0.4015 		&   0.0885 			&   0.0400 		   &   0.0234\\
% -0.6 		&0.2557 		&   0.0813  		&  0.0444 			&   0.0294\\
% -0.4		&0.1416  		&  0.0640  			&  0.0417  			&  0.0310\\
% -0.2		&0.0514  		&  0.0318  			&  0.0242  			&  0.0199
% \end{tabular}
% \caption{Values of the factor $J(\kappa,\alpha,\bb)$ for different $\alpha,\kappa$. The left table shows the results for$\bb$ optimal, as discussed in remark \ref{brem}, the right table shows the result for $\bb_\bj=\bj$.}\label{JFig}
% \end{figure}

For $\bj\in K_\kappa\setminus\{0\},$ the evaluation points $\bb_\bj$ do not appear in the limiting expression in Theorem \ref{L2err}, and we will simply choose the midpoints $\bb_\bj=\bj.$
However, for $\bj=0$ the expression $L(\|\bj\|)$ is not necessarily defined. Indeed, the slowly varying function $L$ might have a singularity at 0. This shows that particular attention should be paid to the choice of $\bb_0$, which is optimal if it minimises the $L^2$ error of the central cell, i.e.,
\[\bb_0=\argmin_{\bb\in\sq_n\setminus{\{ 0}\}}\E\bigg(\int_{\sq_n}g(\bss)W(d\bss)-L(\|\bb\| )\int_{\sq_n}\|\bss\|^\alpha W(d\bss)\bigg)^2. \]
By straightforward calculation it can be shown that this is equivalent to
\begin{align*}
L(\|\bb_0\|)&=\bigg(\int_{\sq_n}\|\bss\|^{2\alpha}L(\|\bss\|)\diff \bss\bigg)\bigg (\int_{\sq_n}\|\bss\|^{2\alpha}\diff \bss\bigg)^{-1}\\
 &= 8 C_{0,0}^{-1}\int_0^{1/\sqrt 2} r^{2\alpha+1}L(r/n) \big(\pi/4 - \arccos (\sqrt 2  r)\mathds 1_{\{r>1/2\}}\big)\diff r ,
 \end{align*}
where $C_{0,0}$ is defined in Appendix \ref{appCov}. The integral on the right hand side is finite for $\alpha>-1$, which follows from the Potter bound \eqref{PotBou}, and can be evaluated numerically.

Let us briefly mention that in principle the hybrid scheme can be extended to simulate stochastic processes of VMMA type in higher dimensions. However, 
to the best of our knowledge there are no closed form expressions for the covariance structure of the higher dimensional analogues of the Gaussian family $\mathcal W_n^1$ available, and they would need to be computed numerically.
Moreover, a similar scheme can be implemented that does not rely on the specific form of $g$ specified in \eqref{gAss} and does therefore in particular allow for anisotropic fields, when the covariance matrix of $\mathcal W_n^1$ is computed numerically. More specifically, replacing $L(\|\bb_\bj\|)W^n_{\bi-\bj,\bj}$ in the definition of $\wt X(\bi/n)$ in \eqref{X_i^nSim} by 
\[\wt W^n_{\bi-\bj,\bj}:=\int_{\dn(\bi-\bj)/n}g(\bi/n -\bss) W(d \bss),\]
the covariance matrix of the i.i.d.~random vectors $\big\{((\wt W^{n}_{\bi,\bj})_{\bj\in K_\kappa},W^n_\bi),\bi\in\Z^2\big\}$ can be computed by numerical integration. Thereafter, $X^n_{\bi/n}$ can be simulated as in \eqref{X_i^nSim}. An obvious drawback of this approach, apart from being more computationally involved, is that in this general setup the roughness of the random field \eqref{VMMA} cannot be characterised by a single parameter $\alpha$, and we do not pursue this idea further.

%%%%%%%%%%%

\section{Numerical results}
\label{secNum}
%%%%%%%%%%%

In this section we demonstrate in a simulation study that the hybrid scheme is capable of capturing the roughness of the process correctly, and compare it in that aspect to other simulation schemes. Before doing so, we present in Figure~\ref{figVol} samples of VMMAs highlighting the effect of volatility. The volatility is modelled as $\sigma_\bt^2=\exp(X'_\bt),$ where $X'$ is again a volatility modulated moving average, compare Example~\ref{ambit fields}. For $X'$ we choose the roughness parameter $\alpha=-0.2$ and the slowly varying function $L(x)=e^{-x}$. For the first realisation we chose $\alpha=-0.3$ and $L(x)=e^{-x}$. For the second we chose $\alpha=-0.7$ and $L$ such that the model has Mat\'ern covariance, see Example~\ref{Matern}. In both cases it becomes apparent that areas of lower volatility cause the VMMA field to vary less.

 \begin{figure}
 \includegraphics[width=0.47 \textwidth]{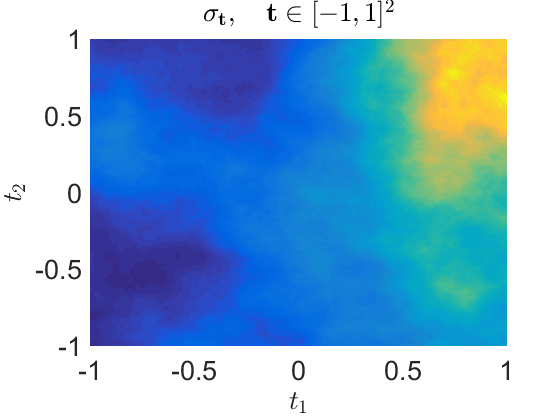}\hspace{1em}
 \includegraphics[width=0.47 \textwidth]{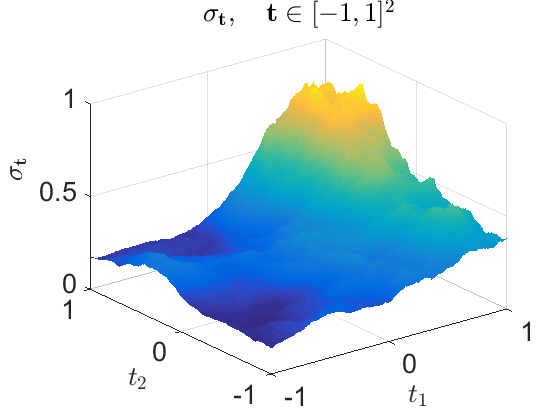}\\[2em]
 \includegraphics[width=0.47 \textwidth]{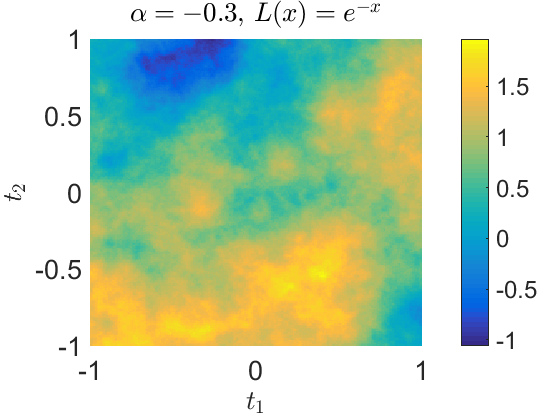}\hspace{1em}
 \includegraphics[width=0.47 \textwidth]{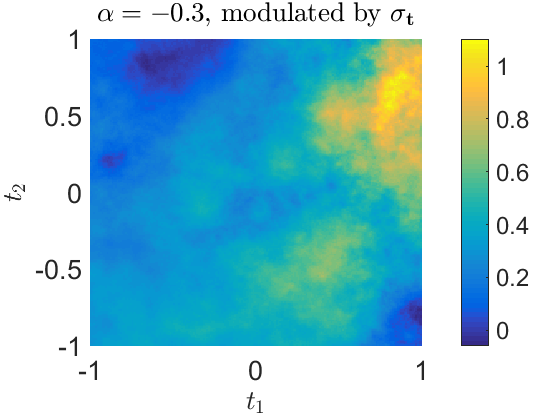}\\[2em]
 \includegraphics[width=0.47 \textwidth]{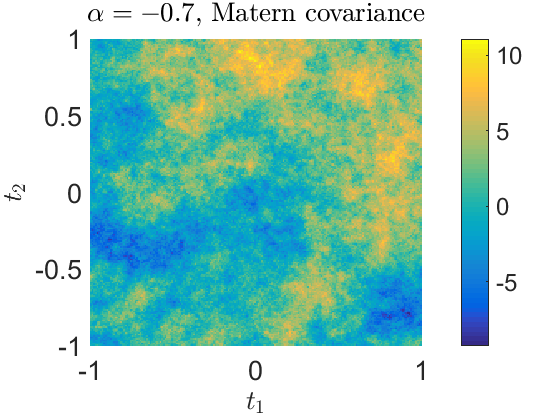}\hspace{1em}
 \includegraphics[width=0.47 \textwidth]{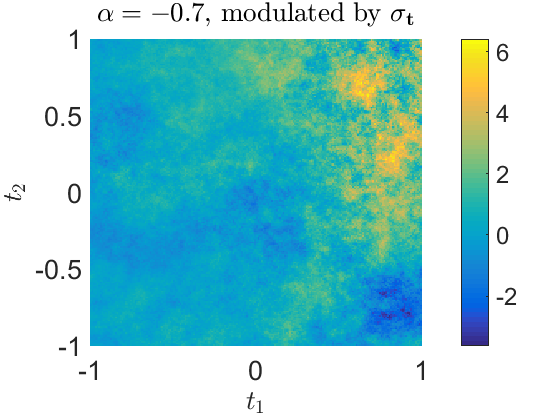}
  \caption{Examples for moving average fields modulated by volatility. The first row shows the volatility $(\sigma_\bt)_{\bt\in\R^2}$ modelled as $\sigma^2_\bt=\exp(X'_\bt)$, where $X'$ is again a VMMA field.
The second and third row show realisations of VMMAs. On the left hand side the field is simulated with constant volatility, the right hand side is generated by the same Gaussian noise and with the same model parameters, but is modulated by $(\sigma_{\bt})_{\bt\in\R^2}$.
For the second row we chose $\alpha=-0.3$ and the slowly varying function $L(x)=e^{-x}.$ The third row is generated with $\alpha=-0.7$ and Mat\'ern covariance.}
 \label{figVol}
  \end{figure}

For our simulation study we first recall the definition of fractal or Hausdorff dimension. For a set $S\subset \R^d$ and $\eps>0$, an $\eps$-cover of $S$ is a countable collection of balls $\{B_i\}_{i\in\N}$ with diameter $|B_i|\leq \eps$ such that $S\subset \bigcup_{i} B_i.$ The $\delta$-dimensional Hausdorff measure of $S$ is then defined as 
\[H^\delta (S)=\lim_{\eps\to 0}\inf\bigg\{\sum_{i=1}^\infty |B_i|^\delta\ :\ \{B_i\}_{i\in\N}\text{ is }\eps\text{-cover of }S\bigg\},\]
and the fractal or Hausdorff dimension of $S$ is 
\[\HD(S):=\inf\{\delta>0\ :\ H^{\delta}(S)=0\}.\]
The Hausdorff dimension of a spatial stochastic process $(X_\bt)_{\bt\in\R^2}$ is the (random) Hausdorff dimension of its graph $\HD(\{(\bt,X_{\bt}),\ \bt\in \R^2\})$, and takes consequently values in $[2,3].$ For the model \eqref{VMMA} with constant volatility $\sigma\equiv 1$ it follows easily from a standard result \cite[Theorem 8.4.1]{Adl1981} and Theorem \ref{Roughness} that $\HD(X)=2-\alpha,$ see also \cite{HanTho2013}. 
In \cite{GneSevPer2012}, the authors give an overview over existing methods for estimating the Hausdorff dimension of both time series data and spatial data, and provide implementations for various estimators in form of the \texttt{R} package \texttt{fractaldim} \cite{fractaldim}, which we rely on.

We estimate the Hausdorff dimension from simulations of $X$ generated by the hybrid scheme, and compare to estimates from other simulation methods.
We consider the model \eqref{VMMA} with constant volatility $\sigma$ and Mat\'ern covariance, see Example \ref{Matern}. In this case the process $X$ can be simulated exactly using circulant embeddings of the covariance matrix, to which we compare.
Note that exact simulation is only available for Gaussian processes with known covariance function and is not applicable for general VMMAs. Moreover we compare to the Riemann-sum scheme introduced in \eqref{Riemann}. For the hybrid scheme we consider $\kappa=0,1,2,3$.
With each technique we simulate 100 i.i.d.~Monte-Carlo samples of the process $(X_\bt)_{\bt\in[-1,1]^2}$ for every $\alpha\in \{-0.8,-0.7,...,-0.1\}$. As grid resolution we chose $n=100$ and, for the hybrid scheme and the Riemann-sum scheme, $N_n=[n^{1+\gamma}]$ with $\gamma=0.3,$ i.e. $N_n=398.$
Thereafter we estimate the roughness of $X$ and average the estimates over the Monte-Carlo samples. There is a variety of different estimators for fractal dimension of spatial data.
For a detailed overview and asymptotic properties we refer to \cite{GneSevPer2012} and the references therein. We apply the square increment estimator $\nu_{SI}$ introduced and analysed by Chan and Wood \cite[(4.3)]{ChaWoo2000} because of its favourable asymptotic properties, see \cite{ChaWoo2000,GneSevPer2012}. Figure \ref{RoughEstFig} shows the results and compares them to the theoretical value of the Hausdorff dimension $2-\alpha$, plotted as dashed line. For the second plot in the figure we remark that the sample variance of the roughness estimates was between 0.005 and 0.01, for all values of $\alpha$ and all simulation methods.

Exact simulation using circulant embeddings performs slightly better than the hybrid scheme, in particular when $\alpha\approx 0.$ This is not surprising, taking into account that the roughness of the process is governed by the behaviour of the kernel $g$ at 0, which is well approximated by the hybrid scheme but, intuitively speaking, perfectly recovered by exact simulation. Let us stress again that exact simulation using circulant embeddings is only available for the model \eqref{VMMA} in a few special cases. For $\kappa\geq 1$ the hybrid simulation scheme recovers the roughness very precisely, when $\alpha<-0.3.$ When $\alpha\geq -0.3$ or $\kappa=0$ it still performs reasonably well but tends to overestimate the roughness of the process slightly. This behaviour is likely to be caused by the at 0 slowly varying function, $L(x)=x^{-\alpha/2}K_{\alpha/2}(x)$ in the Mat\'ern covariance case, which, intuitively speaking, varies more at 0 for larger values of $\alpha.$ As expected, the Riemann-sum approximation underestimates the roughness of the field significantly, as it does not account for the explosive behaviour of $g$ at 0.

For the exact simulation via circulant embeddings we used the \texttt{R} package \texttt{RandomFields} \cite{RandomFields}, and refer to \cite{GneSevPerSchJia2006} for more details on this simulation method. For the roughness estimation we relied on the \texttt{R} package \texttt{fractaldim} \cite{fractaldim}. Our implementation of the hybrid scheme is in MATLAB.
%For a discussion and many properties of the circulant embedding method in the context of simulating spatial Gaussian fields we refer to \cite{GneSevPerSchJia2006}.   

In Table \ref{ComTim} we compare computation times for the hybrid scheme, the circulant embeddings method, and the Riemann-sum scheme.
For generating a single realisation, the circulant embedding method and the Riemann-sum scheme perform faster than the hybrid scheme. The main reason for this, however, is the costly computation of the covariance of the family $\mathcal W^1_n$, which is only required once when generating i.i.d.~Monte-Carlo samples. In view of the rather long computation times for all algorithms, let us stress that $n=100$ corresponds to simulating $X$ on a fine grid containing $(2n+1)^2=40,401$ grid points. 

\begin{table}
\begin{tabular}{lcccccc}
%&\multicolumn{4}{c}{Computation time} \\
%\cline{1-2}
MC samples    & $\kappa= 0$ & $\kappa= 1$ & $\kappa= 2$ & $\kappa=3$ & circ.emb.& Riemann-sum\\[.5ex]
\hline\\[-1ex]
1     & 12.6 s   & 13.2 s	& 14.3 s	&	15.3 s	& 0.8 s    &  1.2 s\\
100   & 51 s     & 61.3 s	& 72.6 s	& 	77.7 s	&75.6 s    &  32.5 s   

\end{tabular}
\caption{Computation time of the hybrid scheme for different $\kappa$, for exact simulation using circulant embeddings, and for the Riemann-sum scheme, for a Mat\'ern covariance Gaussian field.
The first row shows the computation time for a single realisation, the second for 100 i.i.d.~samples.
The parameters of the model were chosen as $n=100,\ \alpha=-0.6,$ and, for the hybrid and the Riemann-sum scheme, $\gamma=0.3 $. The computation time was measured on a computer with with 2.9 GHz CPU and 32 GB RAM.  }\label{ComTim}
\end{table}

%We remark that all these estimators are biased and that their asymptotic properties become worse for large $\alpha$ which is likely to be one of the reasons why the 

\begin{centering}
 \begin{figure}
 \includegraphics[width=0.9 \textwidth]{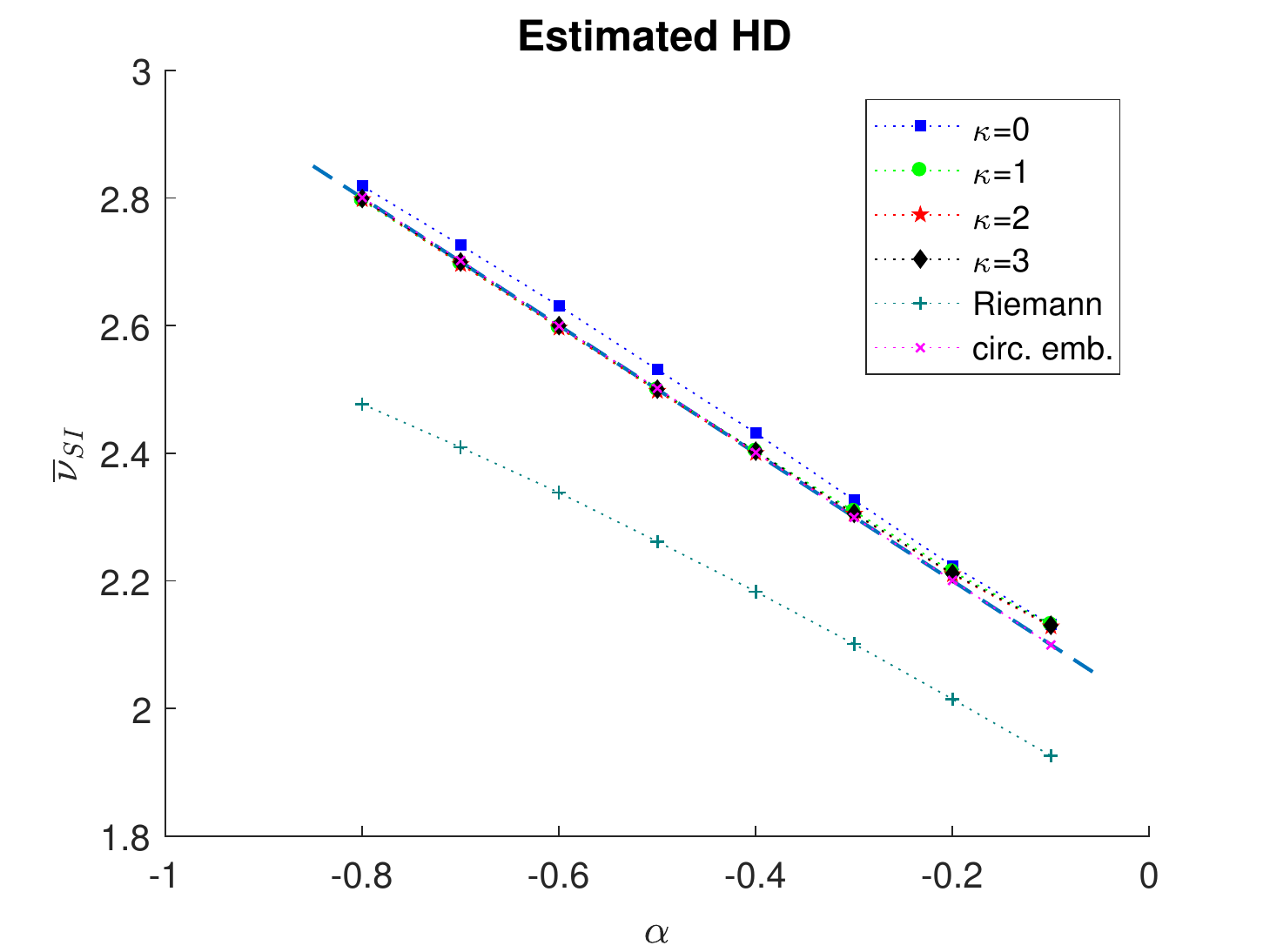}\\[2em]
 \includegraphics[width=0.9 \textwidth]{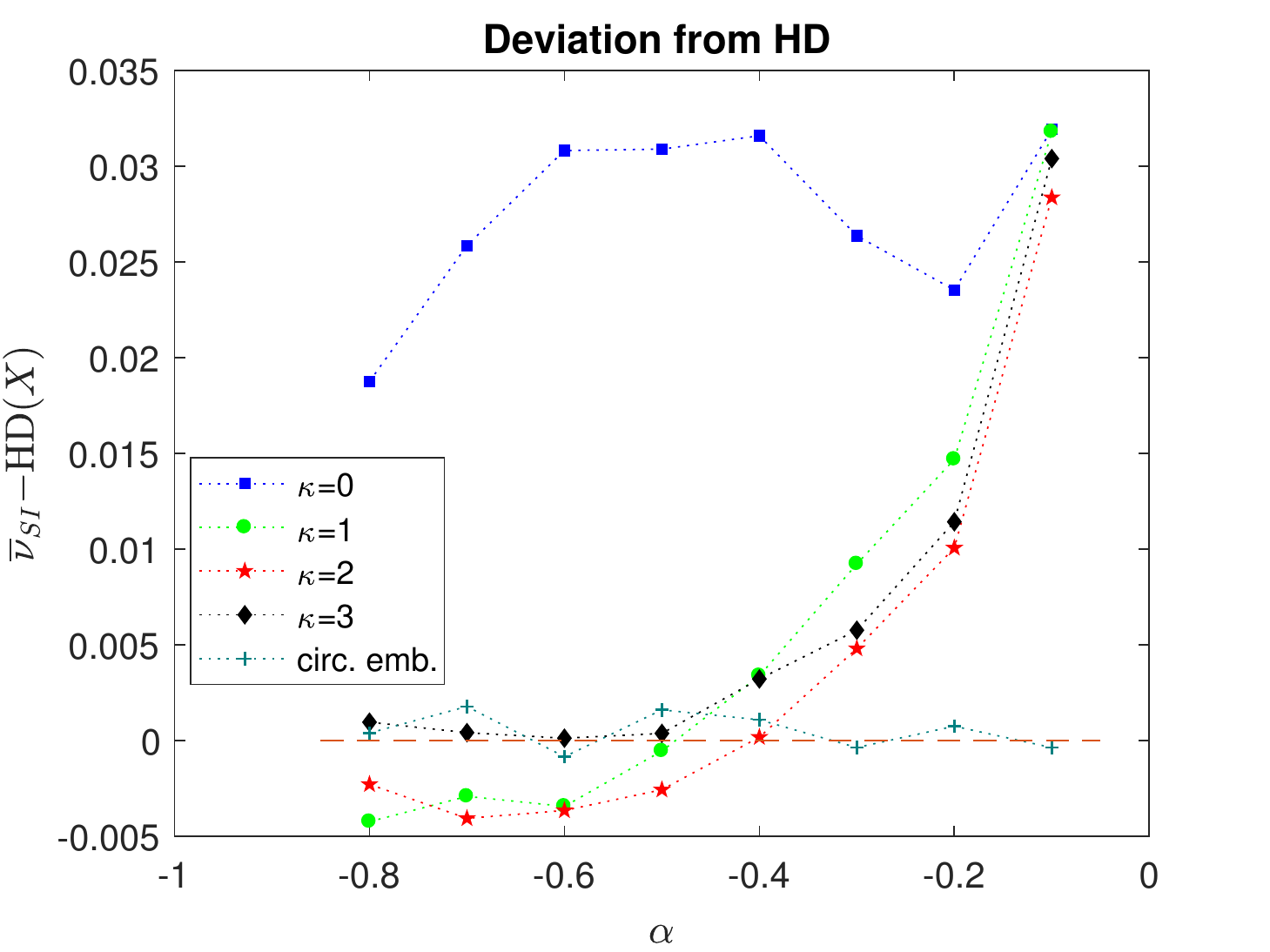}
 \caption{Roughness estimated from samples generated by the hybrid scheme, the Riemann-sum approximation method and by exact simulation using the circulant embedding method for Gaussian fields. The theoretical roughness is marked as a dashed line. The roughness is estimated by the isotropic estimator $\nu_{SI}$ introduced in \cite{ChaWoo2000}, averaged over 100 i.i.d.~samples.  The second plot shows in more detail the deviation between the estimation and the theoretical value, not including the Riemann-sum approximation scheme.}
 \label{RoughEstFig}
 \end{figure}
 \end{centering}

%%%%%%%%%%%%%
\section{Proofs}\label{secPro}
%%%%%%%%%%%%%

This section is dedicated to the proofs of our theoretical results. We begin by recalling the Potter bound which follows from \cite[Theorem 1.5.6]{BinGolTeu1989}. For any $\delta>0$ there exists a constant $C_\delta>0$ such that
\begin{align}\label{PotBou}
L(x)/L(y)\leq C_\delta\max\bigg\{\bigg(\frac{x}{y}\bigg)^\delta,\bigg(\frac{x}{y}\bigg)^{-\delta}\bigg\},\quad x,y\in(0,1].
\end{align}
This bound will play an important role throughout all the proofs in this section.

\begin{proof}[Proof of Theorem \ref{Roughness} (i)]
The proof is similar to the proof of \cite[Proposition 2.1]{BenLunPak2016}.
 We have for $h>0$ by covariance stationarity of $\sigma$ that
 \[V(h)=\E[\sigma_0^2] \int_{\R^2}\big(g(\bss+h\be)-g(\bss)\big)^2 \diff\bss, 
\]
where $\be$ is any unit vector and we used transformation into polar coordinates. We obtain
\begin{align*}
V(h) &= \E[\sigma_0^2] (A_h+A_h'), \quad \text{where }\\
 A_h&=\int_{\{\|\bss\|\leq 1 \}} \big( g(\bss+h\be/2)- g(\bss-h\be/2))^2 \diff \bss,\quad\text{ and }\\
 A'_h&=\int_{\{\|\bss\|> 1 \}} \big( g(\bss+h\be/2)- g(\bss-h\be/2))^2 \diff \bss.
  \end{align*}
Since the function $\wt g$ is continuously differentiable on $(0,\infty)$, we obtain by the mean value theorem the following estimate for $A'_h$.
\begin{align*}
A'_h\leq h^2 \bigg\{ &\int_{\{1<\|\bss\|<M+1 \}} \ \sup_{\{\xi\,:\,|\xi-\|\bss\||\leq  h/\sqrt 2\}} (\wt g'(\xi))^2 \ \diff \bss\\
&+ 2\pi \int_{M}^\infty \wt g'(r)^2  r\diff r\bigg\},
\end{align*}
where we used that $|\wt g'|$ is decreasing on $[M,\infty).$ The term in curly brackets is finite by Assumption \ref{1con2}, and we obtain that $A'_h=\mathcal O(h^2)$, as $h\to 0.$ 
For $A_h$ we make the substitution $\bx=\bss/h$ and obtain
\begin{align*}
 A_h&=h^2\int_{\|\bx\|\leq 1/h} \big( g(h(\bx+\be/2))-g(h(\bx-\be/2)))^2 \diff \bx \\
&=h^{2+2\alpha} L^2(h)\int_{\|\bx\|\leq 1/h} G_h(\bx)\diff \bx,
\end{align*}
where
\[G_h(\bx)=\bigg(\|\bx+\be/2\|^\alpha \frac{L(h\|\bx+\be/2\|)}{L(h)}-\|\bx-\be/2\|^\alpha \frac{L(h\|\bx-\be/2\|)}{L(h)}\bigg)^2 .\]
Note that $G_h(\bx)\to \big(\|\bx+\be/2\|^\alpha- \|\bx-\be/2\|^\alpha\big)^2$, as $h\to 0$. Therefore the first statement of the theorem follows by the dominated convergence theorem if there is an integrable function $G$ satisfying $G(\bx)\geq |G_h(\bx)|$ for all $\bx$ for sufficiently small $h$. The existence of such a function follows since $L$ is bounded away from 0 on $(0,1]$ and by Assumption \ref{1con3}. For details we refer to the proof of \cite[Proposition 2.2]{BenLunPak2016}. 
% Applying the Potter bound \eqref{PotBou} once with $y=1$ and once with $x=1$ yields that  
% \[2+2\alpha=\sup\{\eta>0\,:\, V(h)=o(h^\eta), h\downarrow 0\}=\inf\{\eta>0\,:\, h^\eta=o(V(h)), h\downarrow 0\}. \]
% By \cite[Theorem?]{A2010} this implies that the sample paths of $X$ have with probability 1 Hausdorff dimension $3-\frac{2\alpha+2}2=2-\alpha.$
 \end{proof}
 
 \begin{proof}[Proof of Theorem \ref{Roughness} (ii)]
  The proof relies on the Kolmogorov-Chentsov theorem (cf.\ \cite[Theorem 3.23]{Kal2002}), which requires localisation of the process, as $\sigma$ does not necessarily have sufficiently high moments. We therefore first show the existence of a H\"older continuous version under the assumption that there is an $m>0$ such that
\begin{align}
\label{Loc1}&|\sigma_\bss|^2\leq m,\hspace{4em}\text{ for all }\bss\text{ with $\|\bss\|\leq M+1,\ \omega\in\Omega$, and} \\
&\int_{\{\|\bss\|\geq M+1\}} (g(\bt-\bss)-g(-\bss))^2\sigma_\bss^2 \diff\bss\leq m\|\bt\|^2\nonumber,\\
\label{Loc2}&\hspace{8.5em}\text{ for all }\bt\text{ with }\|\bt\|\leq 1,\ \omega\in\Omega,
\end{align}
where $M$ is as in \ref{1con2}. Thereafter we argue that the theorem remains valid if we relax these assumptions to $\E[\sup_{\|\bss\|\leq M}\sigma_\bss^2]<\infty.$

For $\|\bt\|\leq 1$ we have for all $p>0$ that 
\begin{align*}
\E[(X_\bt-X_0)^p]&\leq C_{p}\E\bigg[\bigg(\int_{\R^2}\big(g(\bt-\bss)-g(-\bss)\big)^2\sigma_{\bss}^2\diff \bss\bigg)^{p/2}\bigg]\\
&\leq C_p m^{p/2} \bigg(\int_{\{\|\bss\|\leq M+1\}}\big(g(\bt-\bss)-g(-\bss)\big)^2\diff \bss + \|\bt\|^2\bigg)^{p/2}\\
&\leq C_p m^{p/2} \bigg( V_0(\|\bt\|) + \|\bt\|^2\bigg)^{p/2},
\end{align*} 
where $V_0$ denotes the variogram of the process $(X_\bt)_{\bt\in\R^2}$ with $\sigma\equiv 1.$ In the first inequality we used that $\sigma$ and $W$ are independent and therefore $X_\bt-X_0$ has a Gaussian mixture distribution with the integral on the right hand side being the conditional variance. Applying the first part of the theorem and the Potter bound \eqref{PotBou} we obtain that for any $\delta>0$ a constant $C_{p,m,\delta}$ such that for all $\bt$ with $\|\bt\|\leq 1$
\[\E[(X_\bt-X_0)^p]\leq C_{p,m,\delta} \|\bt\|^{p+p\alpha-\delta}.\]
Therefore, the Kolmogorov-Chentsov Theorem \cite[Theorem 2.23]{Kal2002} implies that $X$ has a continuous version that is H\"older continuous of any order $\gamma<1+\alpha-\frac \delta p -\frac 2 p,$ and the result follows for any $\gamma\in(0,1+\alpha)$ by letting $p\to\infty.$

We will now complete the proof of the theorem by extending it to processes not satisfying assumptions \eqref{Loc1} and \eqref{Loc2}.
By mean value theorem we obtain that for all $\bt$ with $\|\bt\|\leq 1$
\begin{align*}
&\|\bt\|^{-2 }\int_{\{\|\bss\|\geq M+1\}} (g(\bt-\bss)-g(-\bss))^2\sigma_\bss^2 \diff\bss\\
 &\eqspace\leq \|\bt\|^{-2 }\int_{\{\|\bss\|\geq M+1\}} |\|\bt-\bss\|-\|\bss\||^2\sup_{r\in[\|\bss\|,\|\bt-\bss\|]}\big(\wt g'(r)^2\big)\sigma_\bss^2 \diff\bss\\
 &\eqspace\leq \int_{\{\|\bss\|\geq M+1\}} \wt g'(\|\bss\|-1)^2\sigma_\bss^2 \diff\bss ,
\end{align*}
where we used that $|\wt g'|$ is decreasing on $[M,\infty)$. By taking expectation and transformation into polar coordinates it follows from Assumption \ref{1con2} that the right hand side is almost surely finite. Consequently, the random variable
\[Z:=\max\bigg\{\sup_{\|\bss\|\leq M+1}\big(\sigma_\bss^2\big),\sup_{\|\bt\|\leq 1}\bigg(\|\bt\|^{-2 }\int_{\{\|\bss\|\geq M+1\}} (g(\bt-\bss)-g(-\bss))^2\sigma_\bss^2 \diff\bss\bigg)\bigg\}\]
is almost surely finite. The process $(X_\bt \mathds 1_{\{Z\leq m\}})_{\bt\in \R^2}$ satisfies conditions \eqref{Loc1} and \eqref{Loc2} and coincides with $X$ on $\{Z\leq m\}$. Therefore, the existence of a version of $X$ with $\alpha+1-\eps$-H\"older continuous sample paths follows by letting $m\to\infty.$
\end{proof}

For the proof of Theorem \ref{L2err} we need the following auxiliary result. The proof is similar to the proof of \cite[Lemma 4.2]{BenLunPak2016} and not repeated. 

\begin{lem}\label{SlowVar}Let $\alpha\in\R$ and $\bj\in \Z^2\setminus\{(0,0)\}$. If $\bb_{\bj}\in \sq \bj,$ it holds that
\begin{enumerate}[label=(\roman{enumi})]
\item $\displaystyle\lim_{n\to\infty}\int _{\boxempty \bj}\bigg(\|\bx\|^\alpha \frac{L(\|\bx\|/n)}{L(1/n)}-\|\bb_\bj\|\frac{L(\|\bb_\bj\|/n)}{L(1/n)}\bigg)^2 \diff\bx =\int_{\boxempty\bj}(\|\bx\|^\alpha-\|\bb\|^\alpha)^2\diff\bx,$
\item $\displaystyle\lim_{n\to\infty}\int _{\boxempty \bj}\|\bx\|^{2\alpha}\bigg(\frac{L(\|\bx\|/n)}{L(1/n)}-\frac{L(\|\bb_\bj\|/n)}{L(1/n)}\bigg)^2 \diff\bx
	=0.$
\end{enumerate}
The same holds for $\bj=(0,0)$ if $\bb_{(0,0)}\neq (0,0)$ and $\alpha>-1.$
 \end{lem}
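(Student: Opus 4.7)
The strategy is the standard one for convergence results of this type: establish pointwise convergence of the integrands using the defining property of slow variation, and upgrade to convergence of integrals by dominated convergence with a majorant extracted from the Potter bound \eqref{PotBou}. The only subtlety is that the singularity of $\|\bx\|^\alpha$ at the origin forces a more delicate treatment of the case $\bj=(0,0)$, and this is precisely what pins down the hypothesis $\alpha>-1$.

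\emph{Case $\bj\in\Z^2\setminus\{(0,0)\}$.} Here $\boxempty\bj$ is a compact set bounded away from the origin, so both $\|\bx\|$ and $\|\bb_\bj\|$ take values in a fixed compact interval $[a,b]\subset(0,\infty)$, independently of $n$. By the uniform convergence theorem for slowly varying functions (the version at $0$, e.g.\ the analogue of \cite[Theorem~1.2.1]{BinGolTeu1989}), $L(\lambda/n)/L(1/n)\to 1$ uniformly for $\lambda\in[a,b]$ as $n\to\infty$. Thus the integrands in (i) and (ii) converge pointwise (indeed uniformly) to $(\|\bx\|^\alpha-\|\bb_\bj\|^\alpha)^2$ and to $0$, respectively. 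For the majorant, Potter's bound gives, for any fixed $\delta>0$ and all $n$ large enough, the estimate $\|\bx\|^{2\alpha}(L(\|\bx\|/n)/L(1/n))^2\leq C_\delta^2\|\bx\|^{2\alpha}\max(\|\bx\|^{2\delta},\|\bx\|^{-2\delta})$, which is bounded on $\boxempty\bj$ because $\|\bx\|\in[a,b]$; the analogous bound applies to the $\bb_\bj$-term. Both (i) and (ii) then follow by dominated convergence.

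\emph{Case $\bj=(0,0)$ with $\bb_0\neq(0,0)$ and $\alpha>-1$.} Now $\boxempty_0$ contains the origin and $\|\bx\|^\alpha$ has a singularity there. Pointwise convergence on $\boxempty_0\setminus\{0\}$ still follows from the uniform convergence theorem applied on each annular region $\{a\leq\|\bx\|\leq\sqrt{2}/2\}$, and the $\bb_0$-term is constant in $\bx$ and uniformly bounded in $n$ by the argument of the previous case. The delicate point is the $\bx$-term: applying Potter's bound with $\delta\in(0,1+\alpha)$ — a non-empty interval precisely because $\alpha>-1$ — yields the majorant $C_\delta^2\bigl(\|\bx\|^{2(\alpha+\delta)}+\|\bx\|^{2(\alpha-\delta)}\bigr)$, whose most singular part $\|\bx\|^{2(\alpha-\delta)}$ is integrable on $\boxempty_0$ by polar coordinates since $2(\alpha-\delta)+2>0$. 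Dominated convergence then closes both (i) and (ii). The main obstacle is thus the case $\bj=(0,0)$, where the hypothesis $\alpha>-1$ enters in exactly the right way to supply an integrable Potter majorant in a neighbourhood of the origin.
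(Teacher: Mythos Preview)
Your proposal is correct and follows the standard approach: pointwise convergence of the integrands via the slow-variation property (or, more cleanly, the uniform convergence theorem), combined with dominated convergence using a Potter-bound majorant, with the case $\bj=(0,0)$ requiring $\alpha>-1$ so that the resulting power $\|\bx\|^{2(\alpha-\delta)}$ is integrable over $\boxempty_0$. The paper does not give its own proof but refers to \cite[Lemma~4.2]{BenLunPak2016}, whose argument is precisely the one you outline here.
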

% \begin{proof}
% Note that a similar result was shown in \cite{BenLunPak}. We just show (i) since (ii) follows by similar arguments. Since the integrand $f_n(\bx)=\bigg(\|\bx\|^\alpha \frac{L(\|\bx\|/n)}{L(1/n)}-\|\bb_\bj\|^\alpha\frac{L(\|\bb_\bj\|/n)}{L(1/n)}\bigg)^2$ satisfies $\lim_{n\to\infty}f_n(\bx)= (\|\bx\|^\alpha-\|\bb_\bj\|^\alpha)^2$ for all $\bx\in\boxempty \bk$ it is sufficient to find a dominating integrable function.
% By the Potter bound it holds that
% \begin{align*}
% f_n(\bx)&\leq 2\|\bx\|^{2\alpha}\bigg(\frac {L(\|\bx\|/n)}{L(1/n)}\bigg)^2+2\|\bb_\bj\|^{2\alpha}\bigg(\frac {L(\|\bb_\bj\|/n)}{L(1/n)}\bigg)^2\\
% &\leq 2C_\delta^2\big(\|\bx\|^{2\alpha}\max(\|\bx\|^\delta,\|\bx\|^{-\delta})^2+\|\bb_\bj\|^{2\alpha}\max(\|\bb_\bj\|^\delta,\|\bb_\bj\|^{-\delta})^2\big):=f(\bx).
% \end{align*}
% For $\bj\neq 0,$ it follows that $\bx\in\boxempty\bj$ and $\bb_\bj$ are bounded away from 0 and $\infty$, and $\int_{\boxempty\bj} f(\bx)\diff \bx<\infty.$ For $\bj=(0,0)$ we obtain
% \[f(\bx)=2C_\delta^2\big(\|\bx\|^{2(\alpha-\delta)}+\|\bb_{(0,0)}\|^{2(\alpha-\delta)}\big).\]
% Since we assumed $\alpha>-1$ in this case, we can choose $\delta$ such that $\alpha-\delta>-1$, and obtain with transformation into polar coordinates that $\int_{\boxempty (0,0)} f(\bx)\diff\bx<\infty$. Therefore, the lemma follows by an application of the dominated convergence theorem.
% \end{proof}

\begin{proof}[Proof of Theorem \ref{L2err}]
Recall the definition 
\begin{align*}
X_\bt^{n}:= &\sum_{\bj\in K_\kappa} \int_{\sq_{n} (\bt-\bj/n)}\|\bt-\bss\|^\alpha L(\|\bb_\bj\|)\sigma_{\bt-\bj/n}W(d\bss)\\
&+\sum_{\bj\in \ol K_\kappa}\int_{\sq_{n} (\bt-\bj/n)} g(\bb_\bj/n)\sigma_{\bt-\bj/n}W(d\bss).
\end{align*}

We introduce the auxiliary object ${X'}^{n}$ defined as 
\begin{align*}
{X'_\bt}^{n}:= &\sum_{\bj\in K_\kappa\cup \ol K_\kappa }\sigma_{\bt-\bj/n} \int_{\sq_{n} (\bt-\bj/n)}g(\bt-\bss)W(d\bss)\\
&+\int_{\R^2\setminus \sq_{N_n/n}\bt} g(\bt-\bss)\sigma_{\bss}W(d\bss).
\end{align*}
Denoting $E_n:=\E[|X^n_\bt-{X'}^n_\bt|^2]$ and $E_n':=\E[|X_\bt-{X'}^n_\bt|^2],$ Minkowski's inequality yields 
\begin{align}\label{sigmaestimate}
E_n(1-\sqrt{E_n'/E_n})^2\leq \E[|X^n_\bt-X_\bt|^2]\leq E_n(1+\sqrt{E_n'/E_n})^2.
\end{align}
We will show later that $E_n'/E_n\to 0$ as $n\to\infty,$ and it is thus sufficient to analyse the asymptotic behaviour of $E_n.$

We have that
\begin{align}\label{Edecomp}
E_n
=&\sum_{\bj\in K_\kappa} \int_{\sq_n(\bt-\bj/n)} \big(\|\bt-\bss\|^\alpha L(\|\bb_\bj\|/n)-g(\bt-\bss)\big)^2 \E[\sigma_{\bt-\bj/n}^2]\diff \bss\nonumber\\
&+\sum_{\bj\in\{-n,\dots,n\}^2\setminus K_\kappa}\int_{\sq_n(\bt-\bj/n)} \big(g(\bt-\bss)-g(\bb_{\bj}/n)\big)^2 \E[\sigma_{\bt-\bj/n}^2] \diff \bss\nonumber\\
&+\sum_{\bj\in\ol K_\kappa \setminus \{-n,\dots,n\}^2}\int_{\sq_n(\bt-\bj)} \big(g(\bt-\bss)-g(\bb_{\bj}/n)\big)^2 \E[\sigma_{\bt-\bj/n}^2]\diff \bss\nonumber\\
&+\int_{\R^2\setminus \sq_{(2N_n+1)/n}\bt}  g(\bt-\bss)^2 \E[\sigma_\bss^2]\diff \bss\nonumber\\
=&\E[\sigma_0^2](D_1+D_2+D_3+D_4).
\end{align}
For $D_4$ we obtain, recalling assumption \ref{1con1} and $N_n=n^{\gamma+1}$ that 
\[D_4\leq \int_{\|\bss\|>N_n/n} g(\bss)^2 d\bss=\mathcal O((N_n/n)^{2\beta+2})= \mathcal O(n^{2\gamma(1+\beta)}).\]
Therefore, we have 
\begin{align}\label{D4}
n^{2(1+\alpha)}D_4\to 0.
\end{align}
For $D_3$ we obtain
\[D_3=\sum_{\bj\in\ol K_\kappa \setminus \{-n,\dots,n\}^2}\int_{\sq_n\bj/n} \big(g(\bss)-g(\bb_{\bj}/n)\big)^2 \diff \bss.\]
Recalling the notation $\tilde g(\|\bss\|)=g(\bss)$ we have for $\bss\in\boxempty\bj$ with $\bj\in\ol K_\kappa \setminus \{-n,\dots,n\}^2$ by the mean value theorem $\xi\in[\|\bss\|\wedge\|\bb_{\bj}/n\|,\|\bss\|\vee\|\bb_{\bj}/n\|]$. Since $\tilde g'$ is decreasing on $[M,\infty)$ by assumption \ref{1con2} it follows that
\begin{align*}
|g(\bss)-g(\bb_{\bj}/n)|&=| \tilde g'(\xi )(\|\bss\|-\|\bb_{\bj}\|/n)|\\
&\leq 
\begin{cases}
\frac 1 n \sup_{y\in [1-1/(\sqrt 2n), M+1/(\sqrt 2n)]}|\tilde g'(y)|,&(\|\bj\|-\sqrt 2)/n < M,\\
\frac 1 n |\tilde g'((\|\bj\|-\sqrt 2)/n)|,&(\|\bj\|-\sqrt 2)/n \geq M.
\end{cases}
\end{align*}
  Consequently, we obtain with transformation into polar coordinates
\begin{align}\label{D3}
&\limsup_{n\to\infty}n^2D_3\\
&\eqspace\leq \bigg(\pi(M+1)^2\sup_{z\in[1/2,M+1/2]}|\tilde g'(z)|+C\int_M^\infty r|\tilde g'(r)|^2\diff r\bigg)<\infty.
\end{align}

For $D_1$ we have that
\begin{align*}
D_1&=\frac 1 {n^2}\sum_{\bj\in K_\kappa} \int_{\boxempty\bj} \big(\|\bss/n\|^\alpha L(\|\bb_\bj\|/n)-g(\bss/n)\big)^2 \diff \bss\\
&=\frac {L(1/n)} {n^{2+2\alpha}}\sum_{\bj\in K_\kappa} \int_{\boxempty\bj} \|\bss\|^{2\alpha} \bigg(\frac{L(\|\bb_\bj\|/n)}{L(1/n)}-\frac{L(\|\bss\|/n)}{L(1/n)}\bigg)^2 \diff \bss.
\end{align*}
Since the number of elements of $K_\kappa$ does not depend on $n,$ we have by Lemma \ref{SlowVar}
\begin{align}\label{D1}
\lim_{n\to\infty}\frac{n^{2+2\alpha}D_1}{L(1/n)}=0.
\end{align}

The term $D_2$ can be written as
\begin{align*}
D_2&=\frac 1 {n^2}\sum_{\bj\in\{-n,\dots,n\}^2\setminus K_\kappa} \int_{\boxempty \bj}\big(g(\bss/n)-g(\bb_\bj/n)\big)^2\diff \bss\\
&=\frac {L(1/n)^2} {n^{2+2\alpha}}\sum_{\bj\in\{-n,\dots,n\}^2\setminus K_\kappa} \underbrace{\int_{\boxempty \bj}\bigg(\|\bss\|^\alpha \frac{L(\|\bss\|/n)}{L(1/n)}-\|\bb_\bj\|^\alpha \frac{L(\|\bb_\bj\|/n)}{L(1/n)}\bigg)^2\diff \bss}_{:= A_{\bj,n}}.
\end{align*}
From Lemma \ref{SlowVar} we know that $\lim_{n\to\infty}A_{\bj,n}=\int_{\boxempty\bj}(\|\bss\|^\alpha-\|\bb_\bj\|^\alpha)^2\diff\bss.$ Consequently, if we find a dominating sequence $A_\bj$ such that $A_\bj\geq A_{\bj,n}$ for all $n$ and $\sum_{\bj\in\Z^2\setminus K_\kappa} A_\bj<\infty,$ it follows from dominated convergence theorem that
\begin{align}\label{D2negalpha}
\hspace{-1em}\lim_{n\to\infty} \frac{D_2 n^{2\alpha+2}}{L(1/n)^2}=\sum_{\bj\in \Z^2\setminus K_\kappa} \int_{\boxempty\bj}(\|\bss\|^\alpha-\|\bb_\bj\|^\alpha)^2\diff\bss,\quad\text{for }\alpha\in(-1,0).
\end{align}
It holds that
\begin{align*}
A_{\bj,n}&= \int_{\boxempty\bj}\bigg\{ \big(\|\bss\|^\alpha-\|\bb_\bj\|^\alpha\big) \frac{L(\|\bss\|/n)}{L(1/n)} +\|\bb_\bj\|^\alpha \bigg(\frac{L(\|\bss\|/n)}{L(1/n)}- \frac{L(\|\bb_\bj\|/n)}{L(1/n)}\bigg)\bigg\}^2\diff\bss\\
		&\leq 2\int_{\boxempty\bj} (\|\bss\|^\alpha -\|\bb_\bj\|^\alpha)^2\bigg( \frac{L(\|\bss\|/n)}{L(1/n)}\bigg)^2\diff\bss\\
		&\hspace{1em}+2\int_{\boxempty\bj} \|\bb_\bj\|^{2\alpha}\bigg( \frac{L(\|\bss\|/n)-L(\|\bb_\bj\|/n)}{L(1/n)}\bigg)^2\diff\bss\\
&=:I_{\bj,n}+I'_{\bj,n}.
\end{align*}
For $I'_{\bj,n}$ we note that $\|\bb_\bj\|^{2\alpha}\leq (\|\bj\|-1/\sqrt 2)^{2\alpha}$ for $\alpha<0$. By the mean value theorem we have a $\xi\in[\|\bss\|/n\wedge\|\bb_\bj\|/n,\|\bss\|/n\vee\|\bb_\bj\|/n]$ such that
\[|L(\|\bss\|/n)-L(\|\bb_\bj\|/n)|=L'(\xi)|\|\bss\|/n-\|\bb_\bj\|/n|\leq \frac {C}{n}+\frac{C}{\|\bj\|-1/\sqrt 2}\leq\frac{2C}{\|\bj\|-1/\sqrt 2},
\] 
where we used \ref{1con3} and that $\|\bj\|\leq n.$ Consequently, we obtain
\begin{align}
I'_{\bj,n}&\leq \frac {C}{\inf_{x\in(0,1]}L(x)}(\|\bj\|-1/\sqrt 2)^{2\alpha}\int_{\boxempty \bj} (L(\|\bss\|/n)-L(\|\bb_\bj\|))^2\diff \bss\\
&\leq C( \|\bj\|-1/\sqrt 2)^{2(\alpha-1)}.
\end{align}
For the term $I_{\bj,n}$ we obtain by the Potter bound and the mean value theorem that
\[I_{\bj,n}\leq C_\delta\int_{\boxempty \bj}\min(\|\bss\|,b_{\bj})^{2\alpha-2}\|\bss\|^{2\delta}\diff\bss\leq C_\delta (\|\bj\|-1/\sqrt 2)^{2(\alpha-1+\delta)},\]
where we choose $\delta\in(0,-\alpha).$
Consequently, we obtain $I_{\bj,n}+I'_{\bj,n}\leq C (\|\bj\|-1/\sqrt 2)^{-2}$ for all $n>0$, and since
\[\sum_{\bj\in\Z^2\setminus K_\kappa}C(\|\bj\|-1/\sqrt 2)^{-2}<\infty, \]
\eqref{D2negalpha} follows from dominated convergence theorem and Lemma \ref{SlowVar}. Now \eqref{Edecomp} together with \eqref{D4}, \eqref{D3}, \eqref{D1} and \eqref{D2negalpha} show that
\[E_n\sim \E[\sigma_0^2]J(\alpha,\kappa,\bb)n^{-2(\alpha+1)}L(1/n)^2,\quad n\to\infty.\]
Therefore, recalling \eqref{sigmaestimate}, the proof of statement (i) of the Theorem can be completed by showing that $E_n'/E_n\to 0$ as $n\to \infty.$

Since $\sigma$ is covariance stationary, we obtain for $E_n'$
\begin{align*}
E_n'&=\sum_{\bj\in K_\kappa\cup \ol K_\kappa } \int_{\sq_{n} (\bt-\bj/n)}\E[(\sigma_{\bt-\bj/n}-\sigma_\bss)^2]g(\bt-\bss)^2 d\bss\\
	&=\sup_{\bu\in\sq_n} \E[|\sigma_\bu-\sigma_0|^2]\int_{\R^2} g(\bss)^2 d\bss,
\end{align*}
and $E_n'/E_n\to 0$ follows by the assumption \eqref{sigmasmoothness}
\end{proof}

%%%%%%%%%%%%
\appendix
\section{On general stochastic integrals}\label{appInt}

%%%%%%%%%%%%

We recall the definition of general stochastic integrals of the form $\int_{\R^2}H_\bss W(d\bss)$ where $H$ is a real valued stochastic process, not necessarily independent of $W.$ 
The construction of such integrals dates back to Bichteler \cite{Bic2002}. 
  In a recent publication \cite{ChoKlu2015}, this theory is revisited in a spatio-temporal setting and the authors derive a general integrability criterion for stochastic integrals driven by a random measure that is easy to check. In the context of integrals of the form \eqref{VMMA}, this criterion yields the following statement. 
\begin{prop}
Let $(H_\bss)_{\bss\in\R^2}$ be a real valued stochastic process, measurable with respect to $\mathcal B(\R^2)\otimes \mathcal F$, such that $H\in L^2(\R^2)$, almost surely. Then, the stochastic integral
$\int_{\R^2}H_\bss W(d\bss)$ exists in the sense of \cite{Bic2002}.
\end{prop}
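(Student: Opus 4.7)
The plan is to invoke the general integrability criterion of Chong and Klüppelberg \cite{ChoKlu2015} directly; the statement is essentially a specialisation of their Theorem on integrability with respect to random measures to the case of Gaussian white noise. Their criterion characterises integrability of a $\mathcal B(\R^2)\otimes\mathcal F$-measurable process $H$ with respect to a random measure $\Lambda$ in terms of the characteristic triplet $(b,c,\nu)$ of $\Lambda$: three integrals, corresponding respectively to the drift $b$, the Gaussian variance measure $c$, and the Lévy measure $\nu$ (together with a compensation term), must be finite almost surely.

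The first step is to identify these characteristics for the Gaussian white noise $W$ on $\R^2$. Since $W(A)\sim\mathcal N(0,\lambda(A))$ for $A\in\mathcal B_0$ and $W$ is independently scattered, $W$ has mean zero (so $b\equiv 0$) and no jumps (so $\nu\equiv 0$); the Gaussian variance measure is $c=\lambda$, the Lebesgue measure on $\R^2$. The second step is to substitute these characteristics into the Chong--Klüppelberg criterion. The drift and jump-related integrals vanish identically, the compensation term is trivial, and the only surviving condition is the Gaussian one, which reads
\[\int_{\R^2} H_\bss^2\,\lambda(d\bss)<\infty\quad \text{almost surely.}\]
This is exactly the hypothesis $H\in L^2(\R^2)$ a.s., so the criterion is satisfied and the integral exists in the sense of \cite{Bic2002}.

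The only subtle point that needs to be addressed is why $\mathcal B(\R^2)\otimes\mathcal F$-measurability is sufficient, in contrast to the temporal theory where one usually requires predictability with respect to some filtration. This is precisely where the Chong--Klüppelberg framework is convenient: in the purely spatial setting there is no natural ordering on $\R^2$, and their construction proceeds from product-measurable integrands directly, bypassing the need for a filtration. I expect the main obstacle in writing the proof to be the correct translation of their multi-part criterion into the Gaussian-only setting; once the characteristics of $W$ are correctly identified, the remaining reduction is mechanical.
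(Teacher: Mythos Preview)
Your overall strategy---invoke the Chong--Kl\"uppelberg criterion, identify the characteristics of $W$ as $(b,c,\nu)=(0,\lambda,0)$, and observe that the criterion collapses to $\int_{\R^2}H_\bss^2\,d\bss<\infty$ a.s.---is exactly right and matches the paper. But there is a genuine gap in the last paragraph, and it concerns precisely the point you flag as subtle.

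You assert that the Chong--Kl\"uppelberg framework handles purely spatial integrands directly, ``bypassing the need for a filtration.'' This is not the case: their Theorem~4.1 is formulated in a \emph{spatio-temporal} setting and requires the integrand to be predictable with respect to a filtration in the time variable. There is no purely spatial version of their result that accepts $\mathcal B(\R^2)\otimes\mathcal F$-measurable integrands out of the box. The paper closes this gap with a short but essential trick: lift $W$ to a space-time white noise $\widetilde W$ on $[0,1]\times\R^2$ with $W(A)=\widetilde W([0,1]\times A)$, extend $H$ trivially in time by $H_\bss(t):=H_\bss$, and equip the probability space with the maximal filtration $\mathcal F_t=\mathcal F$ for all $t$. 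With this filtration every $\mathcal B([0,1]\times\R^2)\otimes\mathcal F$-measurable process is automatically predictable, so the Chong--Kl\"uppelberg theorem applies to $\widetilde W$ with characteristics $B=\mu=\nu=0$ and $C$ equal to Lebesgue measure; the resulting condition is exactly $\int_{\R^2}H_\bss^2\,d\bss<\infty$ a.s. Without this lifting step, your appeal to \cite{ChoKlu2015} is not justified as stated.
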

\begin{proof}
We apply the integrability criterion \cite[Theorem 4.1]{ChoKlu2015} that is formulated in a spatio-temporal framework. To this end, we introduce an artificial time component and lift the white noise $W(d\bss)$ to a space time white noise $\wt W(dt;d\bss)$ such that $W(A)=\wt W([0,1]\times A)$ for all $A\in\mathcal B(\R^2).$ Equipping $(\Omega,\mathcal F,\P)$ with the maximal filtration $\mathcal F_t=\mathcal F$ for all $t\in[0,1],$ the spatio-temporal process defined as $H_{\bss}(t):=H_\bss$ for all $t\in[0,1]$ is predictable and it holds that
\[\int_{\R^2} H_\bss W(d\bss)=\int_{[0,1]\times\R^2} H_\bss(t) \wt W(dt;d\bss)\]
if the latter exists.
The random measure $\wt W$ satisfies the conditions of \cite[Theorem 4.1]{ChoKlu2015} with characteristics $B=\mu=\nu=0$ and $C(A;B)=\lambda(A\cap B)$ for all $A,B\in \mathcal B([0,1]\times \R^2)$, where $\lambda$ denotes the Lebesgue measure. The theorem then implies that $H$ is integrable with respect to $W$ if and only if it satisfies almost surely $\int_{\R^2} H_{\bss}^2 d\bss<\infty.$
\end{proof}

Note that the proofs for some of our theoretical results rely on the isometry
\[\E\bigg[\bigg(\int_{\R^2}H_\bss W(d\bss)\bigg)^2\bigg]=\E\bigg[\int_{\R^2} H_\bss^2 d\bss  \bigg],\]
which does not necessarily hold when $H$ and $W$ are dependent. In particular, we cannot rely on Theorem \ref{L2err} in this more general framework. We argue next that the hybrid scheme converges for dependent $\sigma$ and $W$, when $\sigma$ admits a continuous version, without specifying the speed of convergence.

\begin{prop}
Assume that $(\sigma_\bss)_{\bss\in\R^2}$ has a continuous version. Then, $X_\bt^n\toop X_\bt$ for all $\bt\in\R^2$, i.e. the hybrid scheme converges.
\end{prop}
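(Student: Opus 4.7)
The plan is to express $X_\bt-X_\bt^n$ as a single stochastic integral with respect to $W$ and invoke a pathwise continuity property of the Bichteler-type integral: if a sequence of $\mathcal B(\R^2)\otimes\mathcal F$-measurable integrands $H^n$ satisfies $\int_{\R^2}|H^n-H|^2\diff\bss\toop 0$, then $\int H^n\, W(d\bss)\toop \int H\, W(d\bss)$ (an easy consequence of \cite[Theorem 4.1]{ChoKlu2015}; see also the discussion in Appendix~\ref{appInt}). Setting
\[H^n(\bss) := \sum_{\bj \in K_\kappa} L(\|\bb_\bj\|/n)\,\|\bt-\bss\|^\alpha\, \sigma_{\bt-\bj/n}\,\mathds{1}_{\sq_n(\bt-\bj/n)}(\bss) + \sum_{\bj\in \ol K_\kappa} g(\bb_\bj/n)\,\sigma_{\bt-\bj/n}\,\mathds{1}_{\sq_n(\bt-\bj/n)}(\bss),\]
we have $X_\bt^n=\int H^n\, W(d\bss)$ and $X_\bt=\int g(\bt-\bss)\sigma_\bss\, W(d\bss)$, so the task reduces to proving
\[I_n := \int_{\R^2} \bigl|g(\bt-\bss)\sigma_\bss - H^n(\bss)\bigr|^2 \diff\bss \toop 0.\]

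To control $I_n$ I would split the integration domain into the region $D_n := \bigcup_{\bj\in\{-N_n,\ldots,N_n\}^2}\sq_n(\bt-\bj/n)$ actually covered by the scheme and its complement $T_n := \R^2\setminus D_n$. On $D_n$, writing $H^n(\bss) = \tilde g^n(\bss)\,\sigma_{\bt-\bj(\bss)/n}$, where $\bj(\bss)$ is the unique index with $\bss\in\sq_n(\bt-\bj(\bss)/n)$ and $\tilde g^n$ denotes the deterministic piecewise approximation of $g(\bt-\cdot)$ used by the hybrid scheme (power kernel on $K_\kappa$-cells, step function on $\ol K_\kappa$-cells), the elementary inequality $(a+b)^2\le 2a^2+2b^2$ gives
\[\bigl|g(\bt-\bss)\sigma_\bss - H^n(\bss)\bigr|^2 \le 2\sigma_{\bt-\bj(\bss)/n}^2\bigl(\tilde g^n(\bss)-g(\bt-\bss)\bigr)^2 + 2\, g(\bt-\bss)^2\bigl(\sigma_\bss-\sigma_{\bt-\bj(\bss)/n}\bigr)^2,\]
while on $T_n$ the integrand equals $g(\bt-\bss)^2\sigma_\bss^2$.

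The tail contribution $\int_{T_n} g(\bt-\bss)^2\sigma_\bss^2\,\diff\bss$ vanishes almost surely because $D_n$ eventually covers any compact set (its side length $(2N_n+1)/n$ tends to infinity) and $g(\bt-\cdot)\sigma\in L^2(\R^2)$ almost surely, the latter being a prerequisite for $X_\bt$ to be well defined. The two $D_n$-summands are handled by localisation: for any $\eta>0$, choose $R,M>0$ so large that the event $E_{R,M}:=\{\sup_{\|\bss-\bt\|\le R+1}|\sigma_\bss|\le M\}$ has probability at least $1-\eta$, which is possible because the continuous version of $\sigma$ is a.s.\ bounded on any compact set. Restricted to $E_{R,M}$, uniform continuity of $\sigma$ supplies a deterministic modulus $\omega(1/n)\to 0$ bounding $|\sigma_\bss - \sigma_{\bt-\bj(\bss)/n}|$ on $\{\|\bss-\bt\|\le R+1\}$; combined with $g\in L^2(\R^2)$ this forces the second summand's integral over $\{\|\bss-\bt\|\le R+1\}\cap D_n$ to $0$. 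The first summand's integral over the same region is bounded by $M^2\int_{D_n}(\tilde g^n(\bss)-g(\bt-\bss))^2\diff\bss$, which tends to $0$ by exactly the deterministic computation of $D_1+D_2+D_3$ in the proof of Theorem~\ref{L2err} (specialised to $\sigma\equiv 1$). The leftover contribution from $\{\|\bss-\bt\|>R+1\}\cap D_n$ is dominated, after crudely bounding the coefficients, by a further tail of $\int g^2\sigma^2\diff\bss$ that can be made small in probability by choosing $R$ large.

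The main obstacle is that without independence of $\sigma$ and $W$ the $L^2$-isometry used in Theorem~\ref{L2err} is unavailable, so $\E[|X_\bt-X_\bt^n|^2]$ cannot be computed directly. The resolution is to transfer the entire analysis to the level of integrands, where only pathwise properties — continuity of $\sigma$ and a.s.\ $L^2$-integrability of $g(\bt-\cdot)\sigma$ — are invoked, and no additional moment assumptions on $\sigma$ are required beyond those already present in the standing hypotheses.
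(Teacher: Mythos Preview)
Your strategy is sound and reaches the same conclusion, but the route differs from the paper's. The paper interposes the auxiliary object $\wt X^n_\bt=\int \wt\sigma^n_\bss\, g(\bt-\bss)\,W(d\bss)$, where $\wt\sigma^n$ is the piecewise-constant (frozen) volatility on $D_n$ and zero outside. Then $\wt X^n_\bt-X^n_\bt$ isolates the pure kernel-approximation error and is dispatched via the deterministic estimates $D_1+D_2+D_3$ from the proof of Theorem~\ref{L2err}, while $X_\bt-\wt X^n_\bt$ is handled by rewriting $X_\bt=\int \sigma_\bss\,M_{g,\bt}(d\bss)$ with $M_{g,\bt}(A)=\int_A g(\bt-\bss)\,W(d\bss)$ and invoking that the simple integrands $\wt\sigma^n$ converge pointwise to the integrable $\sigma$. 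This last step is where the paper's argument is cleaner than yours: it absorbs both the tail $T_n$ and the volatility-freezing error on $D_n$ into a single appeal to the definition of the Bichteler integral, with no need to localise or split off $\{\|\bss-\bt\|>R+1\}$. Your approach, by contrast, works entirely at the level of integrands and is more explicit, which has the merit of making clear exactly which pathwise properties of $\sigma$ are used.

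Two places in your sketch need tightening. First, on $E_{R,M}$ continuity of $\sigma$ does \emph{not} give a deterministic modulus; the modulus is $\omega$-dependent. The conclusion survives because on $E_{R,M}$ the integrand $g(\bt-\bss)^2(\sigma_\bss-\sigma_{\bt-\bj(\bss)/n})^2$ is dominated by $4M^2 g(\bt-\bss)^2\in L^1$ and converges to zero pointwise, so ordinary dominated convergence (applied for each fixed $\omega$) suffices. Second, your treatment of the leftover region $\{\|\bss-\bt\|>R+1\}\cap D_n$ is too brisk: the term $\int H^n(\bss)^2\,d\bss$ over that region is a Riemann sum $\sum_\bj g(\bb_\bj/n)^2\sigma_{\bt-\bj/n}^2/n^2$ and is not literally a tail of $\int g^2\sigma^2$. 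One clean fix is to take expectations, use covariance stationarity to pull out $\E[\sigma_0^2]$, and bound the remaining deterministic sum via \ref{1con1}; Markov's inequality then makes this small in probability uniformly in $n$ for $R$ large. Alternatively, adopting the paper's $M_{g,\bt}$ device sidesteps this region entirely.
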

\begin{proof}
Using the notation of Section \ref{secHS}, we consider the auxiliary integrals
\[\wt X^n_\bt:=\sum_{\bk\in K_\kappa\cup \ol K_\kappa}\sigma_{\bt-\bk/n}\int_{\sq_{n} (\bt-\bk/n)}g(\bt-\bss)W(d\bss)=\int_{\R^2}\wt \sigma^n_\bss g(\bt-\bss)W(d\bss),\]
where
\[\wt\sigma^n_\bss:=\sum_{\bk\in K_\kappa\cup \ol K_\kappa}\sigma_{\bt-\bk/n}\mathds 1_{\sq_{n} (\bt-\bk/n)}(\bss).\]
By arguing as in the proof of Theorem \ref{L2err}, it follows that $\E[(\wt X^n_\bt-X^n_\bt)^2]\to 0$ as $n\to \infty$, and it is therefore sufficient to argue that $\wt X^n_\bt\toop X_\bt.$ It holds that
\[X_\bt= \int_{\R^2} g(\bt-\bss) \sigma_{\bss}W(d\bss)=\int_{\R^2}\sigma_{\bss} M_{g,\bt}(d \bss),\]
where the random measure $M_{g,\bt}$ is defined as $M_{g,\bt}(A)=\int_A g(\bt-\bss) W(d\bss).$ Since $(\sigma_\bss)_{\bss\in\R^2}$ is continuous, the sequence of simple integrands $\wt\sigma^n$ converges pointwise to $\sigma$, and it follows that
\[X_\bt=\int_{\R^2} \sigma_\bss M_{g,\bt}(d \bss)= \lim_{n\to\infty}\int_{\R^2} \wt\sigma^n_\bss M_{g,\bt}(d \bss)=\lim_{n\to\infty}\wt X^n_\bt,\quad \text{in probability,}\]
by integrability of $\sigma$ with respect to $M_{g,\bt}$.
\end{proof}

\section{The covariance of $\mathcal W^1_n$}
\label{appCov}
%%%%%%%%%%%

In this section we analyse the covariance structure of the Gaussian family $\mathcal W^1_n$ introduced in Section \ref{secHS}. For a wide range of covariances we are able to derive closed expressions, whereas the remaining covariances are computed by numerical integration. Let us remark that, in addition to the symmetry of the covariance matrix, the isotropy of the process adds 8 more spatial symmetries (corresponding to the linear transformations in the orthogonal group $O(2)$ that map the grid $\Gamma_n$ onto itself), which reduces the number of necessary computations drastically.
Since the random variables in $\mathcal W^1_n$ are i.i.d.~along $\bi$, it is sufficient to derive the covariance matrix for 
\[\big\{ W^{n}_{0,\bj},\ W^n_0 \big\}_{\bj\in K_\kappa}.\]
For $\bj_1,\bj_2\in\{-\kappa,\dots,\kappa\}^2$ it holds that
\begin{align*}
C_{1,1}&:=\Var(W^n_0)=\frac 1 {n^2},\\
C_{1,\bj_1}&:=\Cov(W^n_0,W^{n}_{0,\bj_1})=\frac{1}{n^{2+\alpha}}\int_{\sq}\|\bj_1-\bss\|^\alpha\diff\bss,\\
C_{\bj_1,\bj_2}&:=\Cov(W^{n}_{0,\bj_1},W^{n}_{0,\bj_2})=\frac 1 {n^{2+2\alpha}}\int_{\sq}\|\bj_1-\bss\|^\alpha\|\bj_2-\bss\|^\alpha\diff\bss.
\end{align*}

We now derive explicit expressions for $C_{\bj,\bj}$ using the Gauss hypergeometric function $_2F_1$. Clearly, these expressions can be applied to compute $C_{1,\bj}$ by replacing $\alpha $ with $\alpha/2$.
Using symmetries we may assume without loss of generality that $\bj=(j_1,j_2)$ with $j_1\geq j_2\geq 0$. 
We introduce the notation $\tr\bj$ for the area $\{(x_1,x_2)\ :\ j_2\leq x_1\leq j_1,\ j_2\leq x_2\leq x_1\},$ that is a right triangle with lower right vertex $(j_1,j_2)$ and hypotenuse lying on the diagonal $\{(x_1,x_2)\,:\,x_1=x_2\}$. 
In order to obtain explicit expressions for $C_{\bj,\bj}$, we first derive explicit expressions for 
\begin{align}\label{triInt}
\int_{\tr\bj}\|\bx\|^{2\alpha}\diff\bx,\quad \text{for all }\bj=(j_1,j_2)\in\R^2,0\leq j_2< j_1.
\end{align}
Thereafter we give for all $\bj=(j_1,j_2)\in \Z^2$ with $0\leq j_2\leq j_2$ an explicit formula to write $C_{\bj,\bj}$ as linear combination of such integrals.

 Transforming into polar coordinates we obtain that
\begin{align}\label{CovCom}
\hspace{-3em}\int_{\tr \bj}\|\bx\|^{2\alpha} \diff \bx
	&=\int_{\arctan(j_2/j_1)}^{\pi/4} \int_{j_2/\sin(\theta)}^{j_1/\cos(\theta)}r^{2\alpha+1} \diff r \diff \theta\nonumber\\
	&=\frac 1{2\alpha+2} \int_{\arctan(j_2/j_1)}^{\pi/4}\bigg(\frac{j_1}{\cos(\theta)}\bigg)^{2\alpha+2}-\bigg(\frac{j_2}{\sin(\theta)}\bigg)^{2\alpha+2} \diff \theta.
\end{align}
It holds that $\arctan(j_2/j_1)=\arccos(\frac{j_1}{\|\bj\|}),$ and consequently we obtain by substituting $\cos(\theta)=z$ the following expression for the first summand:
\begin{align*}
& \frac {j_1^{2\alpha+2}}{2\alpha+2}\int_{\arctan(j_2/j_1)}^{\pi/4}\cos(\theta)^{-2\alpha-2}\diff \theta\\
	&\eqspace=-\frac {j_1^{2\alpha+2}}{2\alpha+2}\int_{j_1/\|\bj\|}^{\cos(\pi/4)} z^{-2\alpha-2}(1-z^2)^{-1/2}\diff z\\
	&\eqspace=\frac {j_1^{2\alpha+2}} {4(\alpha+1)}\int^{j_1^2/\|\bj\|^2}_{1/2} z^{-\alpha-\frac 3 2}(1-z)^{-1/2}\diff z\\
	&\eqspace=\frac {j_1^{2\alpha+2}} {4(\alpha+1)}\int_{j_2^2/\|\bj\|^2}^{1/2} (1-z)^{-\alpha-\frac 3 2}z^{-1/2}\diff z\\
	&\eqspace=\frac {j_1^{2\alpha+2}} {4(\alpha+1)} (B(1/2;1/2,-\alpha-1/2)-B(j_2^2/\|\bj\|^2;1/2,-\alpha-1/2))\\
	&\eqspace=\frac {j_1^{2\alpha+2}} {2^{3/2}(\alpha+1)} {}_2F_1(1/2,3/2+\alpha;3/2;1/2)\\
	&\eqspace\hspace{1em}-\frac{{j_1^{2\alpha+2}}j_2}{2\|\bj\|(\alpha+1)}{}_2F_1(1/2,3/2+\alpha;3/2;j_2^2/\|\bj\|^2).
\end{align*}
Here, $B(x;p,q)$ denotes the incomplete beta function, satisfying $B(x;p,q)= \frac{x^p} p {_2F_1}(p,1-q;p+1;x)$. For the first equality we used that $d/dz (\arccos(z))=-(1-z^2)^{-1/2}$
For the second summand in \eqref{CovCom} we argue similarly, using that $\arctan(j_2/j_1)=\arcsin(\frac{j_2}{\|\bj\|})$,
\begin{align*}
&- \frac {j_2^{2\alpha+2}}{2\alpha+2}\int_{\arctan(j_2/j_1)}^{\pi/4}\sin(\theta)^{-2\alpha-2}\diff \theta\\
	&\eqspace=- \frac {j_2^{2\alpha+2}}{2\alpha+2}\int_{j_2/\|\bj\|}^{\sin(\pi/4)} z^{-2\alpha-2}(1-z^2)^{-1/2}\diff z\\
	&\eqspace=- \frac {j_2^{2\alpha+2}}{4(\alpha+1)}\int_{j_2^2/\|\bj\|^2}^{1/2} z^{-\alpha-\frac 3 2}(1-z)^{-1/2}\diff z\\
	&\eqspace=- \frac {j_2^{2\alpha+2}}{4(\alpha+1)}\int^{j_1^2/\|\bj\|^2}_{1/2} (1-z)^{-\alpha-\frac 3 2}z^{-1/2}\diff z\\
	&\eqspace= - \frac {j_2^{2\alpha+2}}{4(\alpha+1)} (B(j_1^2/\|\bj\|^2;1/2,-\alpha-1/2)-B(1/2;1/2,-\alpha-1/2))\\
	&\eqspace=  \frac {j_2^{2\alpha+2}}{2^{3/2}(\alpha+1)}{}_2F_1(1/2,3/2+\alpha;3/2;1/2) \\
	&\eqspace\hspace{1em}- \frac {j_2^{2\alpha+2}j_1}{2\|\bj\|(\alpha+1)}{}_2F_1(1/2,3/2+\alpha;3/2;j_1^2/\|\bj\|^2).
\end{align*}
This leads to
\begin{align}
\int_{\tr \bj}\|\bx\|^{2\alpha} \diff \bx
=\ &\frac{j_2^{2\alpha+2}+j_1^{2\alpha+2}}{2^{3/2}(\alpha+1)} {}_2F_1(1/2,3/2+\alpha;3/2;1/2)\nonumber\\
&-\frac{j_1j_2^{2\alpha+2}}{2\|\bj\|(\alpha+1)} {}_2F_1(1/2,3/2+\alpha ;3/2;j_1^2/\|\bj\|^2)\nonumber\\
&-\frac{j_1^{2\alpha+2}j_2}{2\|\bj\|(\alpha+1)}{}_2F_1(1/2,3/2+\alpha ;3/2;j_2^2/\|\bj\|^2),\nonumber
\end{align}
for all $0\leq j_2<j_1$. For implementation we remark that in the case $j_2=0$ the hypergeometric function in the second line is not defined since in this case $j_1^2/\|\bj\|^2=1,$ and we use
\begin{align*}
\int_{\tr (j_1,0)}\|\bx\|^{2\alpha} \diff \bx
&=\frac{\sqrt 2 j_1^{2\alpha+2}}{4(\alpha+1)}{}_2F_1(1/2,3/2+\alpha;3/2;1/2).
\end{align*}

Thus, we have explicit expressions for integrals of the form \eqref{triInt} and all that remains to do is to argue that for $0\leq j_2<j_1$ we can write $C_{\bj,\bj}$ as linear combinations of such integrals.
By symmetry we obtain that
\begin{align*}
 C_{(0,0),(0,0)}=\frac 1 {n^{2+2\alpha}}\int_{\sq}\|\bx\|^{2\alpha}\diff \bx=\frac 8 {n^{2+2\alpha}}\int_{\tr{(1/2,0)'}}\|\bx\|^{2\alpha}\diff \bx.
 \end{align*}
 For  $j>0$ we obtain
 \begin{align*}
C_{(j,j),(j,j)}=&\frac 2 {n^{2+2\alpha}} \int_{\tr{(j+1/2,j-1/2)}}\|\bx\|^{2\alpha}\diff \bx,\quad\text{and}\\
C_{(j,0),(j,0)}=&\frac 2 {n^{2+2\alpha}} \bigg(\int_{\tr{(j+1/2,0)}}\|\bx\|^{2\alpha}\diff \bx-\int_{\tr{(j-1/2,0)'}}\|\bx\|^{2\alpha}\diff \bx\\
&-\int_{\tr{(j+1/2,1/2)}}\|\bx\|^{2\alpha}\diff \bx+\int_{\tr{(j-1/2,1/2)}}\|\bx\|^{2\alpha}\diff \bx\bigg).
\end{align*}
For $0<j_2<j_1$ we obtain
 \begin{align*}
C_{(j_1,j_2),(j_1,j_2)}=&\frac 1 {n^{2+2\alpha}}\bigg(\int_{\tr{(j_1+1/2,j_2-1/2)}}\|\bx\|^{2\alpha}\diff \bx-\int_{\tr{(j_1-1/2,j_2-1/2)}}\|\bx\|^{2\alpha}\diff \bx\\
&-\int_{\tr{(j_1+1/2,j_2+1/2)}}\|\bx\|^{2\alpha}\diff \bx+\int_{\tr{(j_1-1/2,j_2+1/2)}}\|\bx\|^{2\alpha}\diff \bx\bigg).
\end{align*}
This covers all possible choices for $0\leq j_2<j_1$, and consequently we obtain explicit expressions for $C_{\bj,\bj}$ and $C_{\bj,1}$ for all $\bj.$

\bibliographystyle{chicago}

\end{document}